\newcommand{\remove}[1]{}
\newtheorem{thm}{Theorem}[section]
\newtheorem{claim}[thm]{Claim}
\newtheorem{lem}[thm]{Lemma}
\newtheorem{define}[thm]{Definition}
\newtheorem{cor}[thm]{Corollary}
\newtheorem{THM}{Theorem}
\newtheorem{fact}[thm]{Fact}
\renewcommand{\remove}[1]{}
\newcommand{\eps}{{\varepsilon}}
\renewcommand{\l}{\left}
\renewcommand{\r}{\right}
\newcommand{\de}{{\delta}}
\newcommand{\comments}[1]{}
\newcommand{\SO}{\mathbf{SO}}
\newcommand{\SE}{\mathbf{SE}}
\renewcommand{\L}{\mathcal{L}}
\renewcommand{\S}{\mathbb{S}}
\newcommand{\Di}{\mathbf{Disc}}
\newcommand{\D}{\mathbf{D}}
\renewcommand{\P}{\mathcal{P}}
\newcommand{\R}{\mathbb{R}}
\newcommand{\T}{\mathbf{T}}
\newcommand{\C}{\mathcal{C}}
\newcommand{\U}{\mathcal{U}}
\renewcommand{\O}{\mathcal{O}}
\newcommand{\X}{\mathcal{X}}
\newcommand{\Y}{\mathcal{Y}}
\newcommand{\zo}{\{0, 1\}}
\renewcommand{\S}{\mathbb{S}}
\def\draft{0}   % 1 for working draft
    \def\ShowAuthNotes{1}
    \def\ShowAuthNotes{0}
\newcommand{\authnote}[2]{{ \footnotesize \bf{\color{red}[#1's Note: {\color{blue}#2}]}}}
\newcommand{\authnote}[2]{}
\def\sand{%
  \end{tabular}%
  \hskip 0.5em \@plus.17fil\relax
  \begin{tabular}[t]{c}}
\begin{document} 

%\linenumbers

\title{\textbf{On Low Discrepancy Samplings in Product Spaces of Motion Groups}}
\author{Chandrajit Bajaj\thanks{Department of Computer Science and 
The Institute of Computational Engineering and Sciences,
Center for Computational Visualization, The University of Texas at Austin. Email: bajaj@cs.utexas.edu. Research supported in part by NSF Grant OCI-1216701 and SNL contract no. 1439100} \sand Abhishek Bhowmick\thanks{Department of Computer Science, The University of Texas at Austin. Email: bhowmick@cs.utexas.edu. Research supported in part by NSF Grant CCF-1218723.} \sand Eshan Chattopadhyay\thanks{Department of Computer Science, The University of Texas at Austin. Email: eshanc@utexas.edu.  Research supported in part by NSF Grant CCF-1218723.} \sand David Zuckerman\thanks{Department of Computer Science, The University of Texas at Austin. Email: diz@cs.utexas.edu. Research supported in part by NSF Grant CCF-1218723.}}
%\date{December 3, 2013}
\maketitle

\thispagestyle{empty}

\begin{abstract}
Deterministically generating near-uniform point samplings of  the motion groups like $\SO(3)$, $\SE(3)$ and their $n$-wise products $\SO(3)^n$, $\SE(3)^n$ is fundamental to numerous applications in computational and data sciences. %biology, chemistry, physics and computer science.
The natural measure of sampling quality is discrepancy.
In this work, our main goal is construct low discrepancy deterministic samplings in product spaces of the motion groups. To this end, we develop a novel strategy (using a two-step discrepancy construction) that leads to an almost exponential improvement in size (from the trivial direct product). To the best of our knowledge, this is the first nontrivial construction for $\SO(3)^n, \SE(3)^n$ and the hypertorus $\mathbb{T}^n$.

We also construct new low discrepancy samplings of  $\S^2$ and $\SO(3)$. 
The central component in our construction for $\SO(3)$ is an explicit construction of 
$N$ points in $\S^2$ with discrepancy $\tilde{\O}(1/\sqrt{N})$ with respect to convex sets,
matching the bound achieved for the special case of spherical caps in \cite{ABD_12}.
We also generalize the discrepancy of Cartesian product sets \cite{Chazelle04thediscrepancy} to the discrepancy of local Cartesian product sets. 

The tools we develop should be useful
%We believe all these  tools  used in this paper would  also be of independent interest
in generating low discrepancy samplings of  other complicated geometric spaces.
\end{abstract}\newpage
\setcounter{page}{1}

\section{Introduction}

Generating nearly uniformly random samples from rigid body motion groups, like $\SO(3), \SE(3),$etc. is fundamental to several applications in computational sciences\cite{baj3}.  One % of the applications
is in predicting protein-protein docking  where the search and scoring is over rigid and/or flexible motion spaces \cite{baj1}. Other applications include multi-dimensional correlations, molecular dynamics, quantum computation, Monte-Carlo search, functional optimization,  numerical integration \cite{chazelle_discrep_book,Chetelat_Chirikjian,Mitchell_2008,Niederreiter_1990,Radovic_Sobol_Tichy_1996,Wang_Sloan_2008,Yershova_Jain_Lavalle_Mitchell_2010}.

We seek a deterministic sampling, that is, a deterministic construction of a set of points that can serve as a good sample.
The key measure of the quality of such a sampling in most applications is discrepancy.

\begin{define}[Discrepancy] Let $\X$ be a collection of sets in the universe $\U$ and $P \subseteq U$ be a collection of points.
The \emph{discrepancy} of $P$ with respect to $\X$ is
\[ \D(P,\X)=\max_{X \in \X} \left|\frac{|P \cap X|}{|P|} -  \frac{\mu(X)}{\mu(\U)}\right|,\]
where $\mu$ denotes Lebesgue measure.
\end{define}
 
One reason discrepancy is fundamental is that the error in using a point sampling for numerical integration, as in quasi-Monte Carlo methods,
is closely related to the discrepancy.  For excellent books devoted to this subject,
see Niederreiter \cite{Niederreiter_1990}, Matousek \cite{geombook}, and Chazelle \cite{chazelle_discrep_book} .

The low discrepancy sampling techniques are simpler if the underlying domain is a simple  Cartesian product, % sub-ranges
as is the case for translational motion, but it becomes considerably more challenging for, say, the rotation space.
The space $\SO(3)$, or, the special orthogonal rotation group, stems out of rotations around the origin in three dimensional space. The group behaves like the real projective space $\mathbb{R}\mathbb{P}^3$. It can be represented as a $3$-$sphere$, $\S^3$ embedded in $\mathbb{R}^4$ with antipodal points identified. The elements of $\SO(3)$ are defined as $3 \times 3$ orthogonal matrices with determinant $1$. The group operation is multiplication of matrices. However, they are not as numerically stable as quaternions and are less often used. Unit quaternions can be thought of as elements in $\mathbb{R}^4$ with norm $1$. More precisely, $x=(x_1,x_2,x_3,x_4) \in \mathbb{R}^4$ or $x=x_1+x_2i+x_3j+x_4k$ where $||x||=1$. Quaternion representations are  very convenient to combine  two or more rotations. % More on this later.
Yet another representation of motion spaces is the use of Euler angles. Here, one specifies rotations in $\mathbb{R}^3$ about the three angles, one for each axis denoting the angle by which it needs to be rotated. A more detailed discussion can be found in \cite{Mitchell_2008}. \emph{Hopf fibration} (or Hopf coordinates), introduced by Heinz Hopf in 1931 \cite{hopf}, is another way to represent $\SO(3)$ in terms of a local Cartesian product of $\S^1$ and $\S^2$ which are much easier to visualize. We describe \emph{Hopf fibration} in more detail later.

In a series of works \cite{Lind-Valle2003, Yersh-Valle2004,Mitchell_2008,Yershova_Jain_Lavalle_Mitchell_2010},  low discrepancy samplings were constructed in $\SO(3)$ with respect to the class of local Cartesian products of axis aligned grids in $\S^1$ and $\S^2$.  

Next, we have $\SO(2)$ which is the group of rotations in $2$ dimensions. This is a subgroup of $\SO(3)$. Combining both rotation and translation we have the group $\SE(3)$ which captures the most general space of rigid body motion in $3$ dimensions. It involves a rotation component $\SO(3)$ and a translational component $\R^3$. In addition to $\SO(3), \SO(2)$ there are other important subgroups of $\SE(3)$ which are widely used in physics, computational biology and computer science, namely $\S^1$ (the torus), $\S^2$ (the $2$ dimensional sphere), $\T(k)$, the translational group in $k (\leq 3)$ dimensions and $\SE(2)$, the group of rotations and translations in $2$ dimensions. 

Finally, a lot of applications require low discrepancy sampling in product spaces of these basic motion groups. More precisely, we are interested in sampling from groups of the form $\SO(3)^n$, $\SE(3)^n$ and the hypertorus ($\mathbb{T}^n (\mathbb{T}\equiv \S^1)$), to name a few.

\subsection{Results}
Our main result is on obtaining small sized low discrepancy sets of product spaces of motion groups. More precisely, we will be interested in the special Euclidean group in $3$ dimensions, $\SE(3)$. This is the group of translations and rotations in $3$ dimensions. We will next consider the various important subgroups of $\SE(3)$ that are of interest in rigid body kinematics and more generally in computational biology, physics and computer science. The subgroups are \begin{itemize}
\item $\T(k), k=1,2,3:$ The group of translations in $k$ dimensions
\item $\SO(2)\  (\text{or }\S^1):$ The group of rotations in $2$ dimensions
\item $\S^2:$ The group of rotations of the $z$-axis around the origin
\item $\SO(3)\ (\text{or }\S^3):$ The group of rotations in $3$ dimensions
\item $\SE(2):$ The group of translation and rotation in $2$ dimensions
\end{itemize}  

In the first part of the paper, we consider these basic rigid body motion groups and in the later part we construct low discrepancy sets for $n$-wise product spaces of these groups using a careful derandomization of the trivial exponential (in $n$) sized construction.

%Our first result in this paper is an explicit construction of low discrepancy samplings in $\SO(3)$ with respect to the class of local Cartesian products of convex sets in $\S_1$ and $\S_2$, which we call local Cartesian convex sets in $\SO(3)$. 

Our first two results involve construction of low discrepancy sets in $\SO(3)$ and $\S^2$ which we make precise below.

\begin{THM}\label{thm:so3}There exists an efficiently generated collection of points $P$ such that $\D(P,\C)=\O\l(\frac{\log^{2/3} N}{N^{1/3}}\r)$ where $\C$ is the class of local Cartesian convex sets.
\end{THM}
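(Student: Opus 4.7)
The natural plan is to exploit the Hopf fibration to view $\SO(3)$ as a local Cartesian product of $\S^1$ and $\S^2$, so that the Haar measure pulls back (up to a multiplicative constant) to the product of the uniform measures on the two factors. Under this identification, a local Cartesian convex set in $\SO(3)$ becomes a product $A \times B$ with $A$ an arc in $\S^1$ and $B$ a convex subset of $\S^2$.

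I would form the sample as a direct product $P = P_1 \times P_2$. For $P_1$ I would take $N_1$ equispaced points on $\S^1$, which give discrepancy $\O(1/N_1)$ against arcs. For $P_2$ I would invoke the paper's explicit $N_2$-point construction on $\S^2$, which attains discrepancy $\tilde{\O}(1/\sqrt{N_2})$ against all convex sets. A standard add-and-subtract argument,
\[\l|\frac{|P\cap(A\times B)|}{|P|}-\frac{\mu(A)\,\mu(B)}{\mu(\S^1)\,\mu(\S^2)}\r| \;\le\; \D(P_1,\C_1)+\D(P_2,\C_2),\]
valid for every product $A\times B$, gives $\D(P,\C) \le \D_1 + \D_2$ upon taking the supremum over $\C$. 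This is essentially the local Cartesian product-discrepancy lemma that the paper advertises as an independent generalization of Chazelle's classical bound.

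Finally, I would balance $N_1 N_2 = N$ to equate the two contributions $\O(1/N_1)$ and $\tilde{\O}(1/\sqrt{N_2})$. Assuming the polylog hidden in $\tilde{\O}$ is a single $\log$ factor, the choice $N_1 \asymp (N/\log^2 N)^{1/3}$ and $N_2 \asymp (N\log N)^{2/3}$ makes each term $\Theta(\log^{2/3} N / N^{1/3})$, matching the claimed bound. Efficiency of the overall construction follows from the efficiency of the two factor constructions.

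The main obstacle is really the $\S^2$ ingredient itself: explicitly producing $N_2$ sphere points with $\tilde{\O}(1/\sqrt{N_2})$ discrepancy against \emph{all} convex sets---not merely spherical caps---is substantially more delicate than the trivial $\S^1$ factor, and it is for precisely this reason that the paper treats the $\S^2$ bound as its own theorem. A secondary technicality is that the Hopf bundle is only locally a product, so one must argue that sets in $\C$ can be handled chart-by-chart and that the product-discrepancy bound composes across charts with only constant overhead; this is what motivates the ``local'' qualifier in the statement and in the paper's extension of Chazelle's product-set bound.
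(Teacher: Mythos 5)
Your proposal is correct and follows essentially the same route as the paper: decompose $\SO(3)$ via the Hopf fibration as $\S^1 \tilde{\otimes} \S^2$, take the local Cartesian product of equispaced points on $\S^1$ with the paper's convex-set-discrepancy construction on $\S^2$, apply the local Cartesian product discrepancy lemma (Theorem~\ref{thm:localcart}) to add the two errors, and balance with $N_1 \asymp (N/\log^2 N)^{1/3}$, $N_2 \asymp (N\log N)^{2/3}$ --- exactly the parameters the paper uses. The paper's Theorem~\ref{thm:localcart} works fiberwise over the base rather than chart-by-chart, but this is only a presentational difference.
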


To obtain Theorem \ref{thm:so3}, we give an explicit construction of low discrepancy samplings in $\S^2$  with respect to the class of all convex sets. This in particular generalizes the results of \cite{bd12,ABD_12} where the target classes were latitude spherical rectangles (defined later) and spherical caps respectively.
\begin{THM}\label{thm:s2}There exists an efficient deterministic sampling of $N$ points in $\S^2$ with discrepancy $O(\sqrt{\log N/N})$ against all spherical convex sets.\end{THM}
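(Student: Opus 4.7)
The plan is to bootstrap the spherical-cap discrepancy construction of \cite{ABD_12} to all spherical convex sets via polygonal approximation. I would start by taking their explicit $N$-point set $P\subseteq\S^2$, which satisfies $\D(P,\C_{\mathrm{cap}})=\tilde{\O}(N^{-3/4})$ against spherical caps. This cap bound is strictly stronger than the $\tilde{\O}(N^{-1/2})$ target, and the extra slack is precisely what will be consumed when passing from caps to convex sets.

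Next, for an arbitrary spherical convex set $K\subseteq\S^2$, I would construct an inscribed polygon $K^-\subseteq K$ and a circumscribed polygon $K^+\supseteq K$, each the intersection of $k$ hemispheres, with area gap $\mu(K^+\setminus K^-)=\O(1/k^2)$. This is the classical Euclidean estimate on $k$-gon approximation of convex bodies transferred to the sphere, where it is valid because any spherical convex set lies in a hemisphere and so behaves locally like a planar convex set. To handle the polygons themselves, I would triangulate each of $K^{\pm}$ from a fixed interior vertex into $k-2$ spherical triangles, each being the intersection of exactly three hemispheres, and bound the discrepancy of each such triangle by $\O(1)$ cap discrepancies. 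Summing over the $k-2$ disjoint triangles would give $\D(P,K^{\pm})=\O(k)\cdot\tilde{\O}(N^{-3/4})$. Combining with the approximation slack,
\[
\D(P,K)\;=\;\O\!\left(\frac{1}{k^{2}}\right)\;+\;\O(k)\cdot\tilde{\O}\!\left(N^{-3/4}\right),
\]
and balancing by choosing $k=\Theta(N^{1/4})$ equalizes the terms at $\tilde{\O}(N^{-1/2})=\O(\sqrt{\log N/N})$.

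The hard step is the per-triangle discrepancy bound, because the naive inclusion-exclusion over three bounding hemispheres yields eight terms, three of which are biangles (intersections of two hemispheres) whose discrepancy is not obviously controlled by the cap bound. My approach would be to express each biangle as an integral of caps over a one-parameter family obtained by rotating one of the two bounding hemispheres about the axis joining the biangle's antipodal vertices; integrating a uniform cap discrepancy bound then gives each biangle only a constant-factor overhead on top of $\D(P,\C_{\mathrm{cap}})$, and hence each triangle the same. A backup route, should the integral representation be too fragile, is a Beck-style smoothing: replace each hemisphere indicator by a smooth spherical bump of width $1/k$, bound the smoothed discrepancy by applying the strong cap bound to a bounded-total-variation cap-integral representation of the smooth bump, and absorb the smoothing error of $\O(1/k^{2})$ into the approximation slack that is already present. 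Either route produces the desired linear-in-$k$ polygon discrepancy and closes the argument.
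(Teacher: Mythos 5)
Your reduction from convex sets to inscribed/circumscribed polygons and the final balancing are structurally reasonable, but the argument fails at two independent points, either of which is fatal. First, the input bound is not available: \cite{ABD_12} does not give an explicit $N$-point set with spherical cap discrepancy $\tilde{\O}(N^{-3/4})$. That rate is the optimal one known only non-constructively (Beck's theorem); the explicit constructions in \cite{ABD_12} achieve $\O(\sqrt{\log N/N})$, which is exactly the bound the present theorem describes itself as matching. With the correct input $\delta=\tilde{\O}(N^{-1/2})$, your balance $\O(1/k^{2})+\O(k)\,\delta$ optimizes at $k=\Theta(N^{1/6})$ and yields only $\tilde{\O}(N^{-1/3})$, well short of the target. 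Second, the ``hard step'' you flag is indeed where the proof breaks: a constant-factor transfer from caps to spherical triangles is not established by your sketch. Concretely, for a lune $L$ with vertices $\pm p$, any signed combination $\int w(\theta)\mathbf{1}_{H_\theta}\,d\theta$ of hemispheres containing the axis through $\pm p$ is, as a function of the longitude $\phi$ about that axis, of the form $W(\phi)-W(\phi-\pi)$ and hence antiperiodic ($g(\phi)+g(\phi+\pi)=0$); up to an additive constant it can represent $\mathbf{1}_{L}-\mathbf{1}_{-L}$ but never $\mathbf{1}_{L}$ itself. So the rotation-integral route cannot produce the biangle indicator. The smoothing backup also overstates its accuracy: mollifying a hemisphere indicator at width $1/k$ perturbs integrals by the measure of a $1/k$-collar around a great-circle arc, which is $\Theta(1/k)$, not $\O(1/k^{2})$, and this alone degrades the balance to $\tilde{\O}(N^{-3/8})$ even granting the $N^{-3/4}$ cap bound.

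The paper sidesteps both problems by never leaving a class for which an explicit near-$1/N$ construction and exact additivity are available. It takes the Lambert equal-area image of a Sobol/Hammersley set, which has discrepancy $\O(\log N/N)$ against latitude spherical rectangles, reduces convex sets to convex spherical polygons via a spherical separating-great-circle lemma, and sandwiches each polygon between unions of cells of a $k\times k$ latitude--longitude grid. Convexity forces each such union to be a disjoint union of at most $k$ latitude spherical rectangles (one contiguous run per band), so the discrepancy of each approximant is at most $k\D(P)$ by plain additivity over disjoint sets --- no intersections of half-spaces, hence no lune problem, ever arises. The area loss of the grid sandwich is only $\O(1/k)$ (via the $2\pi$ perimeter bound), worse than your $\O(1/k^{2})$, but the far stronger $\O(\log N/N)$ input makes the balance $k\cdot\O(\log N/N)+\O(1/k)$ come out to $\O(\sqrt{\log N/N})$ at $k=\sqrt{N/\log N}$. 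To salvage your approach you would need either an explicit $\tilde{\O}(N^{-3/4})$ cap-discrepancy construction (open) together with a genuine cap-to-triangle transference, or to switch to a class closed under the decomposition you perform, which is what the paper does.
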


Another component we develop to prove Theorem~\ref{thm:so3} is a generalization of a lemma about direct products of low discrepancy sets. See Theorem~\ref{thm:localcart} for details.

In the second part of the paper, we show a novel technique to construct deterministic samplings in $\U^n$ with respect to $\C^n$
with subexponential size in $n$ using low discrepancy sets in $\U$ with respect to $\C$.  
We state our final main theorem informally here.
\begin{THM}[Informal]If there is a low discrepancy ($\eps$) point set in $\U$ with respect to $\C$ of size $m$, then there is a low discrepancy ($\eps n$) point set in $\U^n$ with respect to $\C^n$ of quasipolynomial (in $n,m$) size.
\end{THM}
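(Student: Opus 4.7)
The plan is to reduce the continuous discrepancy problem over $\U^n$ to a purely combinatorial discrepancy problem against axis-aligned rectangles in $[m]^n$, and then appeal to a known pseudorandom generator against such rectangles. This realizes the ``two-step'' strategy alluded to in the abstract and is the natural derandomization of the trivial $P^n$ construction of size $m^n$.

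Fix the given $\eps$-discrepancy sampling $P = \{p_1, \ldots, p_m\} \subseteq \U$. Any candidate sample $S$ will be chosen as a subset of $P^n$, parameterized by an index set $T \subseteq [m]^n$ via $S = \{(p_{i_1}, \ldots, p_{i_n}) : (i_1,\ldots,i_n) \in T\}$. For any product set $C_1 \times \cdots \times C_n \in \C^n$, setting $A_j = \{i \in [m] : p_i \in C_j\}$, we have $|S \cap (C_1 \times \cdots \times C_n)|/|S| = |T \cap (A_1 \times \cdots \times A_n)|/|T|$. Since both $|A_j|/m$ and $\mu(C_j)/\mu(\U)$ lie in $[0,1]$ and differ by at most $\eps$ (by the assumed discrepancy of $P$), a coordinate-by-coordinate telescoping gives $\bigl|\prod_j |A_j|/m - \prod_j \mu(C_j)/\mu(\U)\bigr| \le n\eps$, hence
\[
\D(S, \C^n) \;\le\; \D_{\mathrm{rect}}(T) \;+\; n\eps,
\]
where $\D_{\mathrm{rect}}(T)$ denotes the discrepancy of $T$ against axis-aligned combinatorial rectangles $A_1 \times \cdots \times A_n \subseteq [m]^n$.

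This reduces the theorem to the classical question of building small point sets in $[m]^n$ that fool combinatorial rectangles. Explicit pseudorandom generators are known for this task with seed length polylogarithmic in $n, m, 1/\eps$; one may appeal for instance to the Impagliazzo--Nisan--Wigderson construction or to its successors (Armoni--Saks--Wigderson--Zuckerman, Lu, Gopalan--Meka--Reingold--Trevisan--Vadhan). Setting the rectangle error to $O(n\eps)$, one obtains an index set $T$ of size $(nm/\eps)^{O(\log n)}$, which is quasipolynomial in $n$ and $m$, yielding a sample $S \subseteq \U^n$ of the same size with $\D(S, \C^n) = O(n\eps)$.

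The main obstacle is not the reduction itself, which is a one-step telescoping argument, but rather importing the correct off-the-shelf PRG and checking that its combinatorial guarantee transfers cleanly to the geometric setting; small losses in parameters only affect the hidden polylogarithmic factor in the seed length. A secondary technicality is whether $\C^n$ is read as pure Cartesian products or as \emph{local} Cartesian products (as used earlier for $\SO(3)$ via the Hopf fibration); in the latter case, one first invokes the generalized local Cartesian product lemma (Theorem~\ref{thm:localcart}) to pull apart the local structure, and then applies the derandomized direct-product construction above on each resulting factor.
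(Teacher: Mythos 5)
Your proposal is correct and follows essentially the same route as the paper's proof of Theorem~\ref{thm:genproduct}: index $P^n$ by $[m]^n$ via injective maps, split the error into the combinatorial-rectangle discrepancy of the index set plus a telescoping bound of $n\eps$ on the product of per-coordinate errors, and instantiate the index set with the Gopalan--Meka--Reingold--Zuckerman PRG for combinatorial rectangles to get quasipolynomial size. The only differences are cosmetic (the paper allows distinct $\eps_i$ per coordinate and cites \cite{GMRZ_13} specifically, yielding seed length $\O((\log\log m)\log(nm/\eps))$ plus lower-order terms rather than your generic $(nm/\eps)^{\O(\log n)}$ bound).
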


We get an almost exponential improvement in the size of the point collection (than what a trivial direct product would give) using the following idea. We use low discrepancy sets (of size $m$, say) in the basis groups and take a direct product of the sets to obtain a new set of size $m^n$. Now, we apply another level of discrepancy minimization to further choose a subset of the $m^n$ points to still ensure the right discrepancy. The right tool to make the second step work is a pseudorandom generator which we describe in detail in Section~\ref{sec:product}. Using the results from the first part of the paper, this leads to a host of nontrivial constructions for low discrepancy sets in product spaces of rigid motion groups.

\subsection{Organization}
After some preliminaries, we review existing discrepancy bounds for various subgroups of $SE(3)$ in Section~\ref{sec:basic}. We also prove two of the three main theorems here. In Section~\ref{sec:product} we state our third main theorem on lifting point sets to products of groups.

\subsection{Preliminaries and Notation}

We let $\mu$ denote the Lebesgue measure. 

By convex sets we mean closed convex sets unless otherwise mentioned. We reserve calligraphic letters to denote collection of sets.

We let $\stackrel{\sim}{=}$ denote a  homeomorphism between two spaces.

\remove{
Some combinatorial notions and metrics required for determining the quality of sampling. The universe  will be the  unit cube $\U$ equipped with the volume measure $\mu$ and the Euclidean metric $d()$. Let $\S$ denote the set of all spheres contained in $\U$. We naturally extend the definition of measures in product spaces as the product measure. 

\begin{define}[Dispersion] Let $A$ be a collection of points in the unit cube $\U$. We define the dispersion of $A$ to be the measure of the largest sphere inside $\U$ which doesn't intersect with $A$.

\end{define}

\begin{define}[Min distance of sample] Let $A=\{a_1,..,a_l\}$ be a collection of points in the unit cube $\U$. The minimum distance of $A$ is defined to be $$\min_d(A)= \min_{i \neq j } d(a_i,a_j)   $$
\end{define}

 \begin{define}[$\epsilon$-net] Let the universe be the unit cube $\U$ and $A$ be a collection of points. We say that $A$ is an $\epsilon$-net for the set of spheres $\S$ in $\U$ if for any sphere $S \in \S$ of volume at least $\epsilon$, $A \cap S \neq \phi$. 
\end{define}
}

\section{The basic groups}\label{sec:basic}

We start with the simpler groups and review known constructions of low discrepancy sets in them and use these as building blocks in the other groups.
\subsection{$\T(k):$ The group of translations in $k$ dimensions}
The set of all translations in $k$ dimensions with addition being the operation forms a group. For $k=1,2,3,$ it is a subgroup of $\SE(3)$. We will asume $k \leq 3$ here. Note that $\T(k)\stackrel{\sim}{=} \R^k$. The volume element is simply taken to be the Lebesgue measure in $\R^k$. This group is very well understood in terms of low discrepancy sampling. We will consider the normalized cube $[0,1]^k$. The best construction is captured by the following theorem.
\begin{thm}[Theorem 3.8, \cite{book:neid}]There is an efficient set $\P$ of $N$ points in $[0,1]^k$ that achieves $$\D(\P,\mathcal{C}_1)=\O\l(N^{-1}(\log N)^{k-1}\r),$$ $$\D(\P,\mathcal{C}_2)=\O\l(N^{-1/k}(\log N)^{1-1/k}\r),$$ where $\mathcal{C}_1$ and $\mathcal{C}_2$ are the set of axis aligned hyper-rectangles and convex sets respectively. 
\end{thm}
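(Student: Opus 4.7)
The plan is to prove the two bounds in sequence, deriving the convex-set estimate from the hyper-rectangle one via a standard sandwich argument.

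For the first bound I would take a Hammersley--Halton set. Fix $k-1$ distinct primes $p_1,\dots,p_{k-1}$, and for each $i\in\{0,1,\dots,N-1\}$ define
\[ P_i \;=\; \l(\frac{i}{N},\ \phi_{p_1}(i),\ \phi_{p_2}(i),\ \dots,\ \phi_{p_{k-1}}(i)\r),\]
where $\phi_p$ is the radical-inverse (van der Corput) function in base $p$. Each point is computable in $\O(\log N)$ arithmetic operations, so the set is efficient. The classical Koksma-style argument --- done by digit-by-digit bookkeeping in each base $p_j$, or alternatively via the Erd\H{o}s--Tur\'an--Koksma inequality applied to the exponential sums $\sum_i e^{2\pi i \langle h, P_i\rangle}$ --- bounds the star discrepancy by $\O(N^{-1}(\log N)^{k-1})$. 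Since every axis-aligned hyper-rectangle is a signed sum of at most $2^k$ anchored ``corner'' boxes, the same estimate transfers to $\mathcal{C}_1$.

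For the convex bound I would sandwich. Partition $[0,1]^k$ into a regular grid of side $\de$, and for $C\in\mathcal{C}_2$ let $C_{\mathrm{in}}$ (resp.\ $C_{\mathrm{out}}$) be the union of grid cells fully contained in (resp.\ meeting) $C$. Two facts drive the estimate: (i) the surface area of any convex body inside $[0,1]^k$ is bounded by the surface area of the cube itself, so at most $\O(\de^{-(k-1)})$ grid cells meet $\partial C$; and (ii) the volume gap satisfies $\mu(C_{\mathrm{out}})-\mu(C_{\mathrm{in}})=\O(\de)$. Writing $|\P\cap C|/N - \mu(C)$ as the discrepancy of $C_{\mathrm{out}}$ plus the volume gap, and bounding the former by summing the axis-aligned hyper-rectangle discrepancy over the boundary cells (interior cells cancel against volume), one gets
\[ \D(\P,C) \;\le\; \O(\de) \;+\; \O\l(\de^{-(k-1)}\r)\cdot \O\l(\frac{(\log N)^{k-1}}{N}\r).\]
Balancing the two terms by $\de^k = (\log N)^{k-1}/N$, i.e.\ $\de = N^{-1/k}(\log N)^{1-1/k}$, yields the claimed $\D(\P,\mathcal{C}_2)=\O\l(N^{-1/k}(\log N)^{1-1/k}\r)$.

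The technically heaviest step is the tight star-discrepancy analysis of the Hammersley set, where the base-$p$ digit bookkeeping needs care not to shed logarithmic factors; the optimization and the surface-area bound are lighter. A subtle point in the sandwich argument is that summing per-cell discrepancies over the $\O(\de^{-(k-1)})$ boundary cells is wasteful if done naively, so I would instead write the boundary layer as a union-and-signed-difference of anchored boxes and apply the star-discrepancy bound directly to that decomposition, which is where the sharp $(k-1)$-versus-$k$ exponent trade-off shows up cleanly.
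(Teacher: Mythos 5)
First, note that the paper does not prove this statement at all: it is quoted verbatim as Theorem 3.8 of Niederreiter's book \cite{book:neid}, so there is no in-paper proof to compare against. Your proposal is essentially a reconstruction of the standard argument from that source, and its overall architecture is right: Hammersley/Halton points give star discrepancy $\O(N^{-1}(\log N)^{k-1})$, inclusion--exclusion over the $2^k$ anchored corners transfers this to all axis-aligned boxes, and the convex-set bound is the comparison ``isotropic discrepancy $=\O((\text{extreme discrepancy})^{1/k})$'' obtained by sandwiching $C$ between grid approximations and balancing $\de$ against $\de^{-(k-1)}D_N$. The exponents all check out.

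The one step whose justification does not work as written is the bound on the discrepancy of $C_{\mathrm{out}}$ (and, implicitly, $C_{\mathrm{in}}$). You say the interior cells ``cancel against volume'' and only the $\O(\de^{-(k-1)})$ boundary cells need to be charged, but the point count of $P$ inside the union of the $\Theta(\de^{-k})$ interior cells does not cancel against anything --- $C_{\mathrm{in}}$ is itself a union of boxes whose discrepancy must be controlled, and summing per-cell discrepancies over all of them would cost $\de^{-k}D_N$, destroying the trade-off. The repair is the column decomposition: fix the cell indices of the first $k-1$ coordinates; by convexity the cells of $C_{\mathrm{in}}$ (and of $C_{\mathrm{out}}$) lying in that column form a single contiguous run, hence one axis-aligned box. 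So each of $C_{\mathrm{in}},C_{\mathrm{out}}$ is a disjoint union of at most $\de^{-(k-1)}$ boxes, and the rectangle bound applies once per box, giving exactly the $\de^{-(k-1)}D_N$ term you wrote down. This is precisely the device the paper itself uses on the sphere in the proof of Theorem \ref{thm:mains2gen}, where $Q_1$ and $Q_2$ are decomposed into at most $k$ latitude spherical rectangles, one per pair of consecutive latitudes; with that substitution your argument is complete.
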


\subsection{$\SO(2):$ The group of rotations in $2$ dimensions}
The group of rotations in $2$ dimensions, $\SO(2)$ is homeomorphic to $\S^1$, the unit circle with volume element being the arc length. In such a case, the equi-distributed point set on the unit circle forms the best low discrepancy set.
\begin{lem}[\cite{book:neid}]\label{lem:s1}There is an efficient set $\P$ of $N$ points in $\S^1$ that achieves discrepancy $\O\l(N^{-1}\r)$ against the class of contiguous intervals. 
\end{lem}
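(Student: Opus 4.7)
The plan is to exhibit the obvious construction and verify its discrepancy by a direct counting argument. Parametrize $\S^1$ by arc length, so points are angles in $[0, 2\pi)$ with the normalized measure $\mu(I)/\mu(\S^1) = \ell(I)/(2\pi)$ for a contiguous interval $I$. Take $P = \{p_k : k = 0, 1, \ldots, N-1\}$ where $p_k$ is the point at arc length $2\pi k/N$, i.e., the equispaced sampling. This is plainly efficient to generate.

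Next I would bound $\D(P, \mathcal{I})$ where $\mathcal{I}$ denotes the class of contiguous intervals (arcs). Fix any arc $I \subseteq \S^1$ of length $\ell$. Because the points $p_0, \ldots, p_{N-1}$ are spaced at constant gap $2\pi/N$, the number $|P \cap I|$ of sample points lying in $I$ is either $\lfloor N\ell/(2\pi) \rfloor$ or $\lceil N\ell/(2\pi) \rceil$ — this is a standard observation: partition $\S^1$ into $N$ half-open fundamental arcs of length $2\pi/N$ anchored at the sample points; the arc $I$ intersects a consecutive run of these fundamental arcs, and covers all but at most two of them completely. Consequently
\[
\left|\, |P \cap I| - \frac{N\ell}{2\pi}\,\right| \;\le\; 1,
\]
which upon dividing by $N$ gives
\[
\left|\frac{|P \cap I|}{N} - \frac{\mu(I)}{\mu(\S^1)}\right| \;\le\; \frac{1}{N}.
\]
Taking the supremum over $I \in \mathcal{I}$ yields $\D(P, \mathcal{I}) = O(1/N)$, as claimed.

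There is essentially no obstacle here: the sharpness of the bound relies only on the fact that the gap between consecutive equispaced points is exactly $2\pi/N$, and that an interval can only fail to be an exact union of fundamental arcs at its two endpoints. In fact the same argument gives an exact bound of $1/N$, so no logarithmic overhead or averaging is needed. The only mild care is that the definition counts arcs as the class of sets (not pairs of arcs / unions), so the ``at most two endpoints'' statement is literally a ``at most one or two boundary fundamental arcs partially covered'' statement, which is immediate for a single contiguous interval.
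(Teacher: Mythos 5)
Your proposal is correct and matches the paper's intent exactly: the paper offers no proof of this lemma (it is cited from Niederreiter's book, with the surrounding text indicating the equispaced point set as the construction), and your direct counting argument is the standard verification. One minor imprecision: for a closed arc the count $|P \cap I|$ can be $\lfloor N\ell/(2\pi)\rfloor$, $\lceil N\ell/(2\pi)\rceil$, or one more than that (e.g., an arc of length exactly $2\pi/N$ with both endpoints on sample points contains two points), but your displayed bound $\bigl|\,|P\cap I| - N\ell/(2\pi)\,\bigr| \le 1$ still holds and the $\O(N^{-1})$ conclusion is unaffected.
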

We record the following simple generalization.
Define $$\S^1_{[\theta_1,\theta_2]} = \{ (\cos \theta,\sin \theta) : \theta \in [\theta_1,\theta_2]\}. $$
\begin{cor}[bounded range]\label{cor:s1} For all constants $\theta_1,\theta_2$ such that $0 \le \theta_1 < \theta_2 \le 2\pi$, there exists an efficient deterministic sampling of $N$ points in $\S^1_{[\theta_1,\theta_2]}$that achieves discrepancy $\O\l(N^{-1}\r)$ against all contiguous intervals. 
\end{cor}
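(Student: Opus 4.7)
The plan is to reduce the bounded-range statement to Lemma \ref{lem:s1} by pulling back the full-circle construction along an orientation-preserving uniform rescaling of arc length. Concretely, I would consider the affine map $f:[0,2\pi)\to[\theta_1,\theta_2)$ defined by $f(\phi)=\theta_1+\phi(\theta_2-\theta_1)/(2\pi)$, which induces a bijection $F:\S^1\to\S^1_{[\theta_1,\theta_2]}$ sending $(\cos\phi,\sin\phi)$ to $(\cos f(\phi),\sin f(\phi))$. Let $P\subset\S^1$ be the $N$-point set from Lemma \ref{lem:s1} and define $P'=F(P)\subset\S^1_{[\theta_1,\theta_2]}$. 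Efficiency is inherited directly, since $F$ is computable in constant time per point.

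Next, I would verify that discrepancy is preserved by this rescaling. Any contiguous interval $I\subseteq\S^1_{[\theta_1,\theta_2]}$ pulls back to a contiguous arc $F^{-1}(I)\subseteq\S^1$, and since $F$ is a uniform dilation of arc length by the factor $(\theta_2-\theta_1)/(2\pi)$, one has
\[
\frac{\mu(F^{-1}(I))}{\mu(\S^1)}=\frac{\mu(I)}{\mu(\S^1_{[\theta_1,\theta_2]})},
\qquad |P'\cap I|=|P\cap F^{-1}(I)|.
\]
Taking the supremum over contiguous intervals $I$ on the arc and applying Lemma \ref{lem:s1} to the corresponding supremum over contiguous arcs on the full circle gives $\D(P',\C)=\D(P,\C')=\O(1/N)$, where $\C,\C'$ denote contiguous intervals on the arc and on $\S^1$ respectively.

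This is really just a routine push-forward argument, and I do not anticipate any genuine obstacle. The one thing to double-check is the normalization in the definition of discrepancy: the universe here is $\S^1_{[\theta_1,\theta_2]}$ and the ambient measure $\mu(\S^1_{[\theta_1,\theta_2]})=\theta_2-\theta_1$ is absorbed consistently on both sides of the ratio, so no extra factor of $(\theta_2-\theta_1)/(2\pi)$ survives. Since $\theta_1,\theta_2$ are treated as constants, any residual constant is harmless inside the $\O(\cdot)$. An equally valid (and perhaps more direct) route is simply to place $N$ equispaced points at angles $\theta_1+(i-\tfrac12)(\theta_2-\theta_1)/N$ for $i=1,\dots,N$ and observe that every sub-arc then contains a count within $1$ of its expected proportion, yielding the same $\O(1/N)$ bound; this alternative avoids even the need to cite Lemma \ref{lem:s1}.
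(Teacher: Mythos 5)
Your proposal is correct, and it is precisely the routine rescaling/equispacing argument the paper has in mind: the paper states this corollary as a "simple generalization" of Lemma \ref{lem:s1} and omits the proof entirely. Both of your routes (pulling back the full-circle construction along the affine reparametrization, or directly placing equispaced points on the arc) work, and your check that the normalization by $\mu(\S^1_{[\theta_1,\theta_2]})$ cancels consistently is the only point of substance.
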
 

\subsection{$\S^2:$ The group of rotations of the $z$-axis around the origin}
In this subsection, we present a low discrepancy point collection for convex sets that matches the bound of \cite{ABD_12} who obtain a similar bound for the special case of spherical caps. This is the first contribution of this work.

In the following, we consider the unit sphere $\S^2$ in $\R^3$. A spherical polygon is formed by edges defined by great circles. We will also use a latitude spherical rectangle which is not exactly a spherical rectangle. It is a rectangle formed by two latitudes and two longitudes. A spherical convex set in $\S^2$ is a closed set contained strictly in a hemisphere such that given any two points in the set, the geodesic connecting the two points also lie in the set. 
\paragraph{Point samplings on $\S^2$.}
Let $P_1$ be the Sobol point collection \cite{Sobol67} (or Hammersley point set or any other discrepancy optimal point set against axis aligned rectangles) of size $N$ in the unit square. For details about such point sets, refer to \cite{book:neid}. Define the Lambert equal-area projection $\phi:\R^2 \to \R^3$ which was introduced in \cite{bd12} as \[\phi(x,y) = (2\sqrt{y - y^2}\cos(2\pi x),2\sqrt{y - y^2}\sin(2\pi x) ,1-2y)\] 
The area-preserving Lambert map can be visualized in the following way (see Figure~1). Consider the unit square $[0,1] \times [0,1]$ and expand it to $[-\pi,\pi] \times [-1,1]$ and roll it into a cylinder with unit radius and height $2$. Now, draw a sphere maximally inscribed in the cylinder with the centers of the sphere and the cylinder coinciding. Now, consider any point $(x,y)$ on the unit square. It gets ``stretched'' to a point on the cylinder surface. Now, draw a line radially inward perpendicular to the axis of the cylinder. Let the line hit the sphere surface at point $z$. This is the image of the Lambert map on $(x,y)$. Thus, rectangles on the unit square get mapped to latitude spherical rectangles on the surface of the sphere as in Figure~1. Then our proposed point collection $P$ is the set \begin{equation}\label{eq:areap}\{\phi(x,y) : (x,y) \in P_1\}.\end{equation} From \cite{ABD_12, bd12}, it follows that \begin{equation}\label{eq:areapbound}\D(P,\C) = \O\l(\frac{\log N}{N}\r),\end{equation} where $\C$ is the class of latitude spherical rectangles.

\begin{figure}[h]
\centering
\includegraphics[scale=0.5]{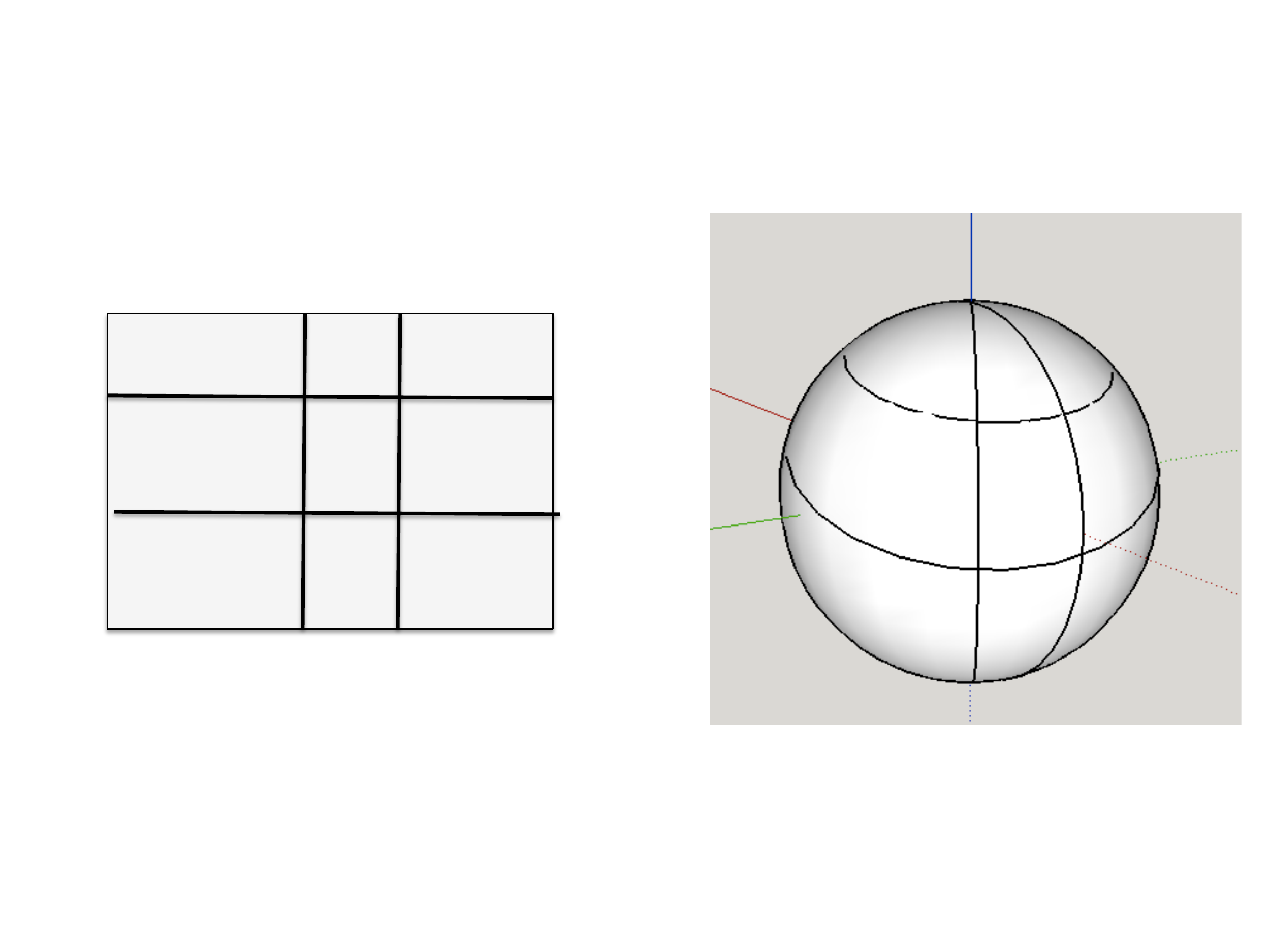}
\caption{The Lambert map: $\phi(x,y) = (2\sqrt{y - y^2}\cos(2\pi x),2\sqrt{y - y^2}\sin(2\pi x) ,1-2y)$}

\end{figure}

\paragraph{Discrepancy bounds for convex sets on $\S^2$.}
We state the main theorem of this section.
\begin{thm}\label{thm:mains2gen} Let $P$ be the point collection in $\S^2$ defined by \text{(\ref{eq:areap})}. Then $\D(P,\C_1)=\O\l(\sqrt{\D(P,\C_2)}\r)$ where  $\C_1$ and $\C_2$ are the class of spherical convex sets and latitude spherical rectangles.\end{thm}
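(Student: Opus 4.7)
The plan is to imitate the classical Beck--Chen argument that bounds convex-set discrepancy in terms of axis-aligned rectangle discrepancy via an inner/outer grid approximation, now carried out on the Lambert square where $\phi$ pulls elements of $\C_2$ back to ordinary axis-aligned rectangles. Fix $K \in \C_1$ and pull it back to $K' = \phi^{-1}(K) \subseteq [0,1]^2$. A vertical line $\{x = x_0\}$ on the square corresponds to a meridian on $\S^2$, which is a geodesic arc; since $K$ is spherically convex, its intersection with any meridian is a single (possibly empty) geodesic sub-arc, so $K' \cap \{x = x_0\}$ is a single interval. Hence $K'$ is sandwiched between an upper boundary $y = f(x)$ and a lower boundary $y = g(x)$.

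Now impose the uniform $\delta \times \delta$ grid on $[0,1]^2$ and let $R_{\mathrm{in}}$ (respectively $R_{\mathrm{out}}$) be the union of the grid cells lying entirely inside $K'$ (respectively intersecting $K'$). Every grid cell is the $\phi$-preimage of an element of $\C_2$, and because $K'$ has connected vertical slices, the cells of $R_{\mathrm{in}}$ within any single column form a contiguous vertical range. Therefore $R_{\mathrm{in}}$ decomposes as a disjoint union of at most $1/\delta$ latitude spherical rectangles, and the same holds for $R_{\mathrm{out}}$. The triangle inequality then gives $\D(P, R_{\mathrm{in}}), \D(P, R_{\mathrm{out}}) \le (1/\delta) \cdot \D(P, \C_2)$, while the sandwich $R_{\mathrm{in}} \subseteq K' \subseteq R_{\mathrm{out}}$ together with area-preservation of $\phi$ yields
\[ \l|\frac{|P \cap K|}{|P|} - \mu(K)\r| \;\le\; \frac{\D(P, \C_2)}{\delta} + \mu(R_{\mathrm{out}} \setminus R_{\mathrm{in}}). \]

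The heart of the argument is to bound $\mu(R_{\mathrm{out}} \setminus R_{\mathrm{in}}) = \O(\delta)$. Because $K$ is spherically convex and sits in a hemisphere, perimeter-monotonicity of convex sets (which transfers verbatim to $\S^2$) gives $\mathrm{length}(\partial K) \le 2\pi$. A direct computation on $\phi$ shows $\|\partial_y \phi\| = 1/\sqrt{y(1-y)} \ge 2$, so the spherical arc-length of any curve dominates twice its $y$-variation on the square; hence the upper and lower boundary curves $f, g$ satisfy $V_f + V_g \le \pi$. Combined with the obvious bound $V_x(\partial K') = \O(1)$ coming from $x \in [0,1]$, a planar curve of total variation $\O(1)$ traverses $\O(1/\delta)$ cells of the $\delta$-grid, so $R_{\mathrm{out}} \setminus R_{\mathrm{in}}$ has area $\O(\delta)$.

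Substituting back and optimizing at $\delta = \sqrt{\D(P, \C_2)}$ yields $\D(P, \{K\}) = \O(\sqrt{\D(P, \C_2)})$; taking the supremum over $K \in \C_1$ completes the proof. The step I expect to require the most care is the boundary-length bound: the Lambert map distorts the metric asymmetrically and inflates horizontal lengths enormously near the poles, so any na\"ive ``length is preserved'' estimate on the square would fail badly. The key saving grace is that the inflation happens only in the $x$-direction (where we have the trivial bound $x \in [0,1]$ anyway), while in the $y$-direction $\phi$ actually compresses distances, so $V_y(\partial K')$ is controlled cleanly by $\mathrm{length}(\partial K) \le 2\pi$.
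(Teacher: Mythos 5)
Your proof is correct and shares the paper's skeleton --- sandwich the convex set between inner and outer unions of latitude spherical rectangles, use convexity to write each union as a disjoint union of $O(1/\delta)$ such rectangles, bound the area of the symmetric difference by $O(\delta)$ via a perimeter bound, and optimize at $\delta = \sqrt{\D(P,\C_2)}$ --- but the execution differs in three useful ways. First, you skip the paper's preliminary reduction from convex sets to spherical convex polygons (proved there via a spherical separating-great-circle lemma); your boundary-cell count works for arbitrary convex sets, so that lemma is not needed. Second, you decompose by longitude columns rather than the paper's latitude strips; this is the more robust choice, since meridians are geodesic arcs and convexity immediately gives connected vertical slices, whereas latitude circles are not geodesics and the paper's claim that the cells between two consecutive latitudes form a contiguous run needs a further (unstated) argument. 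Third, where the paper bounds $|\mu(Q_1)-\mu(X)|$ by shifting the sides of the polygon inward by $O(1/k)$ and invoking the $2\pi$ perimeter bound, you count boundary cells on the Lambert square directly, using the explicit computation $\|\partial_y \phi\| = 1/\sqrt{y(1-y)} \ge 2$ (together with the fact that the boundary is a union of two graphs over the $x$-axis, so its $x$-variation is trivially at most $2$) to convert the spherical perimeter bound into a bound on the $y$-variation of the pulled-back boundary; this correctly isolates the one direction in which the Lambert map's metric distortion could have caused trouble. The only detail to patch: a convex set may straddle the seam $x=0\equiv x=1$ of the Lambert square, in which case $K'$ splits into two graph-bounded pieces; this at most doubles the number of rectangles and the cell count, so the bound is unaffected, but it should be stated.
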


As a corollary we first obtain one of the main theorems of this work.\\ \\
\textbf{Theorem \ref{thm:s2} (Restated). } There exists an efficient deterministic sampling of $N$ points in $\S^2$ with discrepancy $\O\l(\sqrt{\log N/N}\r)$ with respect to all convex sets.\\
\begin{proof}The proof follows from Theorem \ref{thm:mains2gen} and the Equation~\ref{eq:areapbound}.
\end{proof}

In the rest of this subsection, we write $\D(P)$ for the discrepancy of $P$ with respect to latitude spherical rectangles. As already noted above, we have $\D(P)=\O(\log N/N)$. We shall now use the proof technique in \cite{KuipNeid}  and first reduce the class of all spherical convex sets to that of spherical convex polygons. Here we appeal to spherical trigonometry. In the second part, we show that it is enough to guarantee discrepancy against latitude spherical rectangles which we already have in the previous paragraph.

We will need the following  spherical analogue of the separating hyperplane lemma, which we prove  via first principles. To the best of our knowledge, this has not been done before on $\S^2$. 
\begin{lem}\label{lem:hp} Let $C$ be a spherical convex set and $Q$  be any point in $\S^2$ outside $C$. Then there is a great circle  $\ell$ which separates $C$ and $Q$.  
\end{lem}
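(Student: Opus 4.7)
The natural plan is to reduce the statement to the Euclidean separating hyperplane theorem in $\mathbb{R}^3$ by ``coning'' the spherical convex set to the origin. I would introduce the set
\[ \tilde{C} = \{tx : t \geq 0,\, x \in C\} \subseteq \mathbb{R}^3, \]
and first prove that $\tilde{C}$ is a closed convex cone. Closedness is straightforward from $C$ being a closed (hence compact) subset of $\S^2$ that is bounded away from $0$. Convexity is the heart of the argument: given $u = sx, v = ty \in \tilde{C}$ and $\lambda \in [0,1]$, one writes $\lambda u + (1-\lambda)v = ax + by$ with $a,b \ge 0$, and needs $ax+by \in \tilde{C}$. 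Because $C$ lies strictly in an open hemisphere, no two of its points are antipodal, so $ax+by \neq 0$ unless both coefficients vanish. Then $z := (ax+by)/\|ax+by\|$ lies on the short great-circle arc joining $x$ and $y$ in the plane $\text{span}(x,y)$, and spherical convexity of $C$ forces $z \in C$, whence $ax+by = \|ax+by\|\cdot z \in \tilde{C}$.

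With $\tilde{C}$ a closed convex cone in hand, I would observe that $\tilde{C} \cap \S^2 = C$, so the hypothesis $Q \notin C$ implies $Q \notin \tilde{C}$. The standard separation theorem for closed convex cones then produces a linear functional $f : \mathbb{R}^3 \to \mathbb{R}$ with $f(\tilde{C}) \leq 0$ and $f(Q) > 0$; equivalently, starting from the usual affine separating hyperplane $\{f \le c\}$ and using that $\tilde{C}$ is a cone (rescale $x \in \tilde{C}$ by arbitrarily large $t \ge 0$) one concludes $c$ may be taken to be $0$. The hyperplane $H = \{x \in \mathbb{R}^3 : f(x) = 0\}$ passes through the origin, hence $\ell := H \cap \S^2$ is a great circle. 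Since $C \subseteq \tilde{C} \subseteq \{f \le 0\}$ and $Q \in \{f > 0\}$, the two hemispheres cut out by $\ell$ contain $C$ and $Q$ respectively.

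The only genuinely nonroutine step is the convexity of $\tilde{C}$, and it is precisely where the hypothesis that $C$ sits strictly inside a hemisphere gets used (without it, two antipodal points of $C$ could cause $ax+by = 0$ to land at the cone apex and the "nonnegative combinations are on the short arc" picture breaks). Attempting the separation intrinsically on $\S^2$ would require developing a spherical notion of supporting geodesic; passing to $\mathbb{R}^3$ via the cone $\tilde{C}$ avoids this and delegates the work to the classical theorem.
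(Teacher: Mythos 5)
Your proof is correct, but it takes a genuinely different route from the paper's. The paper argues intrinsically on the sphere: it takes the point $Q_1$ of $C$ nearest to $Q$, erects the great circle $\ell$ perpendicular to the geodesic $QQ_1$ with $Q$ and $Q_1$ on opposite sides, and shows by contradiction---dropping a spherical perpendicular from $Q$ onto the geodesic joining $Q_1$ to a hypothetical point $Q'$ of $C$ on the wrong side, and invoking the spherical sine rule---that such a $Q'$ would produce a point of $C$ strictly closer to $Q$ than $Q_1$. You instead pass to $\mathbb{R}^3$, cone $C$ to the origin, check that $\tilde{C}$ is a closed convex cone, and quote the Euclidean separation theorem, pushing the separating hyperplane through the origin by the cone property so that its trace on $\S^2$ is a great circle. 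Your argument correctly isolates the one place the strict-hemisphere hypothesis enters (ruling out antipodal pairs, so nonnegative combinations never collapse to the apex and normalized combinations land on the minor arc, where spherical convexity applies), and it delegates all the remaining work to a classical theorem; the paper's argument is self-contained ``from first principles'' on the sphere, as the authors explicitly intend, and is slightly more constructive in that it exhibits a supporting great circle at the nearest point. Both versions deliver exactly what the subsequent polygon-reduction lemma uses: a closed hemisphere containing $C$ with $Q$ in the complementary open hemisphere.
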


\begin{proof} Let $\ell_1$ be the geodesic connecting $Q$ to the nearest point, say $Q_1$, in $C$. This exists by convexity and closure. Draw great circle $\ell$ perpendicular to $\ell_1$ such that $Q$ and $Q_1$ are on opposite sides of $\ell$. We claim that $\ell$ separates $C$ and $Q$. 

For the sake of contradiction suppose there is some point $Q^{\prime} \in C$ such that $Q^{\prime}$ and $Q$ lie on the same side of $l$. Connect $Q^{\prime}$ and $Q_1$ by a geodesic $\ell_2$ and let $\ell_3$ be a geodesic passing through $Q$ and perpendicular to $\ell_2$ at $P_2$. Consider the spherical triangle $QQ_1Q_2$. By construction, we have $\angle QQ_2Q_1 = \pi/2$. Further, since $QQ_1$ is perpendicular to $\ell_1$, we have $  \theta = \angle QQ_1Q_2 < \pi/2$. Using the Sine rules for spherical triangles , we have $$ \frac{\sin \pi/2}{\sin |QQ_1|} = \frac{\sin \theta}{\sin |QQ_2|}$$ which implies $\sin |QQ_1| > \sin |QQ_2|$. Since $|QQ_i| \le \frac{\pi}{2}$ for $i=1,2$ and observing that the $sine$ function is an increasing function in the interval $(0,\frac{\pi}{2}]$, we have $|QQ_1| > |QQ_2|$.

This contradicts our assumption that $Q_1$ is the closest point to $Q$ in $C$.
\end{proof}

We now state the following lemma which reduces proving low discrepancy bounds for spherical convex sets to low discrepancy sets for spherical convex polygons.
\begin{lem}$\L(P)=\max_{X \in \mathcal{P}}\l|\frac{|P \cap X|}{|P|}-\frac{\mu(X)}{\mu(\S^2)}\r|$ where $\mathcal{P}$ denotes the class of all spherical convex polygons.
\end{lem}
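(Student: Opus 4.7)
The plan is to show that it suffices to bound the discrepancy over convex polygons by showing every spherical convex set can be sandwiched, in measure, between an inscribed and a circumscribed convex polygon. One direction is immediate: every spherical convex polygon is itself a spherical convex set, so $\D(P,\C_1) \geq \max_{X \in \mathcal{P}} |\,|P\cap X|/|P| - \mu(X)/\mu(\S^2)\,|$. The real work is the reverse inequality, which will be established by a standard approximation argument adapted to the sphere using Lemma~\ref{lem:hp} as the separation tool.

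For the outer approximation, fix a spherical convex set $C$ and $\eps > 0$. The set $K_\eps = \{x \in \S^2 : d(x,C) \geq \eps\}$ is compact, and for each $x \in K_\eps$, Lemma~\ref{lem:hp} yields a great circle $\ell_x$ separating $x$ from $C$; the open half-sphere on the $x$-side is an open neighborhood of $x$. By compactness, finitely many such open half-spheres cover $K_\eps$, and intersecting the closed complementary half-spheres (each of which contains $C$) produces a spherical convex polygon $C_{\text{out}} \supseteq C$ contained in the $\eps$-neighborhood of $C$; in particular $\mu(C_{\text{out}}) - \mu(C) \to 0$ as $\eps \to 0$. One must check that this intersection lies in a hemisphere so that it is a legitimate spherical convex polygon — this holds because it contains $C$ and is contained in a hemisphere dictated by any single one of the separating half-spheres through a far enough point.

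For the inner approximation, choose an $\eps$-net $\{p_1, \ldots, p_k\}$ of points on $\partial C$, and let $C_{\text{in}}$ be the spherical convex hull of this net (the intersection of the closed half-spheres bounded by the separating great circles for points outside this finite set, or equivalently the polygon obtained by joining consecutive $p_i$ by geodesic arcs lying in $C$). Convexity and closure of $C$ imply $C_{\text{in}} \subseteq C$, and as the net becomes finer, $\mu(C \setminus C_{\text{in}}) \to 0$ by standard arguments in spherical convex geometry. Thus for any $\eps > 0$ we obtain convex polygons $C_{\text{in}} \subseteq C \subseteq C_{\text{out}}$ with $\mu(C_{\text{out}}) - \mu(C_{\text{in}}) < \eps$.

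Given this sandwich, the usual two-sided estimate
\[
\frac{|P \cap C_{\text{in}}|}{|P|} - \frac{\mu(C_{\text{out}})}{\mu(\S^2)} \;\leq\; \frac{|P \cap C|}{|P|} - \frac{\mu(C)}{\mu(\S^2)} \;\leq\; \frac{|P \cap C_{\text{out}}|}{|P|} - \frac{\mu(C_{\text{in}})}{\mu(\S^2)}
\]
yields $\D(P,\C_1) \leq \max_{X \in \mathcal{P}} |\,|P\cap X|/|P| - \mu(X)/\mu(\S^2)\,| + \eps/\mu(\S^2)$, and letting $\eps \to 0$ finishes the proof. The main obstacle is verifying that the finite intersection of half-spheres in the outer step is genuinely a spherical convex polygon in the author's sense (contained in a hemisphere with geodesic edges); the other estimates are routine approximation of measures.
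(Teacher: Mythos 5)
Your proposal is correct in outline, but it takes a genuinely different route from the paper, and the comparison is instructive. You approximate the convex set $C$ \emph{in measure}: an outer polygon within $\eps$ of $C$ (via a compactness cover by separating half-spheres from Lemma~\ref{lem:hp}) and an inner polygon (convex hull of a boundary net), followed by a limiting argument as $\eps \to 0$. The paper instead builds the two polygons \emph{relative to the finite point set $P$}: the inner polygon $Q_1$ is the spherical convex hull of $P \cap C$, and the outer polygon $Q_2$ is the intersection of the hemispheres obtained by applying Lemma~\ref{lem:hp} once to each of the finitely many points of $P \setminus C$. This guarantees $|P \cap Q_1| = |P \cap C| = |P \cap Q_2|$ exactly while $\mu(Q_1) \le \mu(C) \le \mu(Q_2)$, so the discrepancy of $C$ is immediately dominated by that of $Q_1$ or $Q_2$ with no $\eps$, no compactness cover, and no limit. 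The paper's construction thereby sidesteps the weakest step of your argument, namely the claim that $\mu(C \setminus C_{\mathrm{in}}) \to 0$ for the convex hull of a boundary net, which you only assert ``by standard arguments''; that fact is true but is precisely the kind of spherical-convexity bookkeeping the paper's point-adapted polygons render unnecessary. On the other hand, your version proves the slightly stronger, $P$-independent statement that every spherical convex set is sandwiched in measure by polygons, and your observation that one must verify the finite intersection of hemispheres is genuinely a spherical convex polygon (contained strictly in a hemisphere) is a legitimate point that the paper's own construction of $Q_2$ also quietly requires.
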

\begin{proof}Given any spherical convex set $C$, we will show the existence of two spherical convex polygons, $Q_1$ and $Q_2$ such that $\mu(Q_1)\leq \mu(C) \leq \mu(Q_2)$ and $|Z \cap P|$ is the same for $Z \in \{C,Q_1,Q_2\}$. This is enough because then we have \[\l|\frac{|P \cap C|}{|P|}-\frac{\mu(C)}{\mu(\S^2)}\r|\leq \max \l(\l|\frac{|P \cap Q_1|}{|P|}-\frac{\mu(Q_1)}{\mu(\S^2)}\r|,\l|\frac{|P \cap Q_2|}{|P|}-\frac{\mu(Q_2)}{\mu(\S^2)}\r|\r)\] and this finishes the proof.
We now begin showing the existence of $Q_1$ and $Q_2$ as required above.

We begin with $Q_1$. Let $P'=P \cap C$. Let $Q_1$ be the spherical convex hull of $P'$. Clearly, $\mu(Q_1) \leq \mu(C)$. Also, note that $Q_1 \cap P=C \cap P$ as $Q_1$ is the spherical convex hull of $P'$.

We now show the existence of $Q_2$. Let $P''=P - C$. Applying Lemma \ref{lem:hp} to each point with respect to $C$, we get a great circle $\ell$. Let $Q_2$ be the intersection of all the hemispheres containing $C$ that are defined by the  above $\ell$'s. Clearly, we have $\mu(C) \leq \mu(Q_2)$. Also, by construction, we have $|P \cap C|=|P \cap Q_2|$.
This finishes the proof.
\end{proof}

We now turn to proving the main theorem of this section.

\begin{proof}[Proof of Theorem \ref{thm:mains2gen}] The idea of the proof is similar to the proof of the plane case but requires a more careful analysis. Given a spherical convex polygon $X$, we will sandwich it between two sets $Q_1$ and $Q_2$ where both $Q_1$ and $Q_2$ are unions of of latitude spherical rectangles. Thus, we will have $Q_1 \subseteq X \subseteq Q_2$. With this in place, observe that \[\l(\frac{|Q_1 \cap P|}{|P|}-\frac{\mu(Q_1)}{\mu(\S^2)}\r)+\l(\frac{\mu(Q_1)}{\mu(\S^2)}-\frac{\mu(X)}{\mu(\S^2)}\r) \leq \l(\frac{|P \cap X|}{|P|}-\frac{\mu(X)}{\mu(\S^2)}\r) \]\[\leq \l(\frac{|Q_2 \cap P|}{|P|}-\frac{\mu(Q_2)}{\mu(\S^2)}\r)+\l(\frac{\mu(Q_2)}{\mu(\S^2)}-\frac{\mu(X)}{\mu(\S^2)}\r)\]
Hence, we have \[\l|\frac{|P \cap X|}{|P|}-\frac{\mu(X)}{\mu(\S^2)}\r| \leq \max_{i=1,2}\l|\frac{|P \cap Q_i|}{|P|}-\frac{\mu(Q_i)}{\mu(\S^2)}\r|+\frac{\max_{i=1,2}|\mu(Q_i)-\mu(X)|}{\mu(\S^2)}\]

We now construct sets $Q_1$ and $Q_2$. Let $k>0$ be some integer parameter that we will choose later. Consider the hemisphere in which the spherical polygon $X$ lies. We draw $k$ equispaced latitudes and $k$ equispaced longitudes in this hemisphere. Thus the hemisphere is divided $k^{2}$ latitude spherical rectangles. We define $Q_1$ to be the union of all those latitude spherical rectangles which lie completely inside $X$ and $Q_2$ to be the union of all those latitude spherical rectangles which have non trivial intersection with $X$. We observe that by construction $Q_1 \subseteq X \subseteq Q_2$.

To bound the $\l|\frac{|P \cap Q_i|}{|P|}-\frac{\mu(Q_i)}{\mu(\S^2)}\r|$ for $i=1,2$, we need the following claim.
\begin{claim}
$Q_1$ and $Q_2$ defined above can be written as a disjoint union of at most $k$ latitude spherical rectangles. 
\end{claim}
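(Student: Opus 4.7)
The plan is to attack the claim column by column. The hemisphere is partitioned by the $k$ longitudes into $k$ lune strips (the ``columns''), and each such column is further sliced by the $k$ latitudes into at most $k$ stacked cells. I will show that within each column $j$, the cells belonging to $Q_1$ (respectively $Q_2$) form a contiguous range of latitudes; such a contiguous stack of cells is, by definition, a single latitude spherical rectangle. Since there are $k$ columns and distinct columns overlap only on boundary arcs, this immediately yields a disjoint union of at most $k$ latitude spherical rectangles.

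For $Q_1$, suppose cells $(j,i_1)$ and $(j,i_2)$ with $i_1<i_2$ both lie entirely in $X$. Fix any longitude $\phi\in[\phi_j,\phi_{j+1}]$. The points $(\phi,\theta_{i_1})$ and $(\phi,\theta_{i_2+1})$ sit inside the bottom and top cells respectively, hence both belong to $X$. They share a meridian, so the geodesic joining them is precisely the meridian arc between latitudes $\theta_{i_1}$ and $\theta_{i_2+1}$; by spherical convexity of $X$ this whole arc lies in $X$. As $\phi$ ranges over $[\phi_j,\phi_{j+1}]$ these arcs sweep out the entire latitude spherical rectangle $[\phi_j,\phi_{j+1}]\times[\theta_{i_1},\theta_{i_2+1}]$, so every intermediate cell is contained in $X$.

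For $Q_2$, write $L_j$ for the lune strip of column $j$. Because $X$ lies strictly inside an open hemisphere $H$ and $L_j$ is the intersection of $H$ with the two closed half-spheres cut off by the bounding meridians, $L_j$ is a spherically convex subset of $H$: inside an open hemisphere, half-spheres are spherically convex and the intersection of spherically convex sets is spherically convex. Consequently $X\cap L_j$ is spherically convex and, in particular, path-connected. If cells $(j,i_1)$ and $(j,i_2)$ both meet $X$, pick witnesses $q_1,q_2\in X\cap L_j$ in these two cells and join them by a continuous path inside $X\cap L_j$; the latitude coordinate is continuous along the path, so by the intermediate value theorem it attains every value in $[\theta(q_1),\theta(q_2)]$, forcing the path to enter the latitude band of every intermediate cell and thus every such cell to intersect $X$.

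The main obstacle is the spherical convexity of $L_j$. On the full sphere a closed lune is not geodesically convex, because antipodal boundary points admit multiple geodesics; restricting to the open hemisphere removes all antipodal pairs and lets us view $L_j$ as a well-behaved intersection of spherically convex half-spheres. Once this technicality is in place, contiguity within each column is essentially the spherical analogue of the standard planar fact that a convex set meets a vertical strip in a set whose vertical projection is an interval, and the bound of $k$ rectangles follows immediately.
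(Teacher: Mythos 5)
Your proof is correct, but it is not the paper's proof transposed cosmetically --- you slice the hemisphere into \emph{lunes} bounded by meridians and prove contiguity of the selected cells along each column, whereas the paper fixes two consecutive \emph{latitudes} and asserts contiguity along each row, justifying it only by the phrase ``due to convexity \ldots there can be no gaps.'' The choice of direction actually matters. Meridian arcs are geodesics, so your two key steps are sound: for $Q_1$, sweeping the meridian segment between a point of the lowest cell and a point of the highest cell over all longitudes of the column literally fills the intermediate rectangle with geodesics contained in $X$; for $Q_2$, the lune intersected with the open hemisphere is spherically convex, so $X\cap L_j$ is connected and the intermediate value theorem on the latitude coordinate does the rest. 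The paper's row version does not have this luxury: latitude circles are not geodesics, and the geodesic joining two points at the same (nonzero) latitude arcs toward the pole. Consequently a convex $X$ can contain two cells of a row without containing the cells between them --- take $X$ to be the spherical convex hull of two cells $R_a, R_b$ in a band $[\theta_0,\theta_1]$ with $\theta_0>0$; the great circle through the inner bottom corners of $R_a$ and $R_b$ bounds a half-sphere containing $X$ but excluding the bottom edge of any intermediate cell, so row-contiguity fails for $Q_1$ (a thin convex set arcing out of the top of a band similarly breaks row-contiguity for $Q_2$). So your argument not only reaches the stated bound of $k$ rectangles; it is the version of the strip decomposition for which the convexity argument is actually valid, and it repairs a real gap in the paper's one-line justification. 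The only points to tidy up are the usual measure-zero issues (adjacent lunes share a boundary meridian, so make the cells half-open if you want the union to be literally disjoint for point counting) and the observation that the closed cells lie strictly inside the open hemisphere so all the geodesics you invoke are unique minor arcs.
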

\begin{proof}To prove this, let us fix two consecutive latitudes and consider the $k$ latitude spherical rectangles $\{ R_1,R_2,\cdots,R_k\}$ which are formed by these two latitudes and the $k$ longitudes. Notice that due to convexity of $Q_1$ there can be no gaps and hence there exists $1\le i\le j \le m$ such that $R_l \in Q_1$ for all $l \in [i,j]$. Since this is true for all consecutive latitudes, we see that $Q_1$ can be written as a disjoint union of at most $k$ latitude spherical rectangles.  An identical argument proves the result for $Q_2$.
\end{proof}

Using the above claim, we bound $\l|\frac{|P \cap Q_i|}{|P|}-\frac{\mu(Q_i)}{\mu(\S^2)}\r|$ for $i=1$. An identical bound follows for $i=2$. 

Let $Q_1 = R_1 \cup R_2 \cdots \cup R_j$, $j\le k$, where $R_i$'s are disjoint latitude spherical rectangles. We have 
\begin{align*}
\left|\frac{|P \cap Q_1|}{|P|}-\frac{\mu(Q_1)}{\mu(\S^2)} \right| &= \left|\sum_{i=1}^{j} \left( \frac{|P \cap R_i|}{|P|} -  \frac{\mu(R_i)}{\mu(\S^2)}\right)\right| \\
          									&\le \sum_{i=1}^{j}  \left| \frac{|P \cap R_i|}{|P|} -  \frac{\mu(R_i)}{\mu(\S^2)}\right| \\
									&\le k\D(P)
\end{align*}

We shall now bound $|\mu(Q_1)-\mu(X)|$. The proof for $Q_2$ is identical. The following fact is well known. 
\begin{fact}[\cite{geombookold}]\label{clm:perimeter}The perimeter of a spherical polygon is at most $2 \pi$.\end{fact}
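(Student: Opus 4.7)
Let $X$ be a spherical convex polygon with vertices $v_1,\ldots,v_n \in \S^2$ and let $O$ denote the center of the sphere. Each edge $v_iv_{i+1}$ is a geodesic arc on the unit sphere whose length equals the central angle $\alpha_i := \angle v_i O v_{i+1}$, so $\text{perimeter}(X) = \sum_{i=1}^n \alpha_i$. The plan is to interpret this sum as the total face angle at the apex of a convex polyhedral cone in $\R^3$ and then bound it.

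Form the cone $C = \{\lambda p : \lambda \ge 0,\ p \in X\}$. Spherical convexity of $X$ makes $C$ convex, and the hypothesis that $X$ sits in a closed hemisphere places $C$ inside a closed half-space bounded by a supporting hyperplane through $O$. The lateral faces of $C$ are planar sectors with apex $O$ and opening angle exactly $\alpha_i$, so $\sum_i \alpha_i$ is precisely the sum of the apex face angles of $C$. Invoking the classical fact that a convex polyhedral cone in $\R^3$ lying in a half-space has apex face-angle sum at most $2\pi$ then finishes the proof. I would justify that cone fact by unrolling $C$ into the plane, cutting along one edge and laying each lateral sector flat while preserving its face angle; the resulting planar fan about a single vertex has total opening $\sum_i \alpha_i$ and cannot exceed a full turn without overlap, since unrolling is a local isometry and the sectors were disjoint on $C$.

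I expect the main obstacle to be the rigorous verification that no overlap occurs in the unrolling, which requires tracking the combinatorics of the cone's dihedral angles. A conceptually cleaner alternative is the Cauchy--Crofton formula on $\S^2$: parameterize unoriented great circles by their poles in $\S^2/\{\pm 1\}$, giving a space $G$ of total Haar measure $2\pi$ (the normalization is pinned down by checking the formula on $\gamma$ a great circle, where both sides equal $2\pi$). For any rectifiable curve $\gamma \subset \S^2$ one then has
\[
\text{length}(\gamma) \;=\; \tfrac{1}{2}\int_G \#(\gamma \cap g)\,dg.
\]
Spherical convexity of $X$ forces a generic great circle to meet $\partial X$ in either $0$ or $2$ points, so $\text{length}(\partial X) \le \tfrac{1}{2} \cdot 2 \cdot \mu(G) = 2\pi$. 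This route trades the combinatorial overlap analysis for a single standard identity from integral geometry, and is the route I would actually write up.
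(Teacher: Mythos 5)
Your proof is correct, and it is genuinely "different" from the paper's only in the sense that the paper offers no proof at all: Fact~\ref{clm:perimeter} is simply cited from \cite{geombookold}. Two remarks on the comparison. First, as stated in the paper the fact is false --- a non-convex spherical polygon can have arbitrarily large perimeter --- so the hypotheses you quietly supply (convexity and containment in a hemisphere, which are exactly the properties of the sets $X$ to which the paper applies the fact) are essential, and it is worth saying so explicitly. Second, your two routes are both sound but of different character. The cone argument reduces the claim to Euclid XI.21 (the face angles of a convex polyhedral angle sum to less than $2\pi$); this is as elementary as it gets, but your proposed justification by unrolling is the weak point, since proving that the unrolled fan does not overlap is essentially equivalent to the inequality itself. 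The classical proof instead passes to a planar base polygon cutting the cone and counts angles: the $n$ lateral triangles contribute $n\pi$ in total, the base angles sum to more than the $(n-2)\pi$ of the base polygon by the three-face-angle inequality at each base vertex, so the apex angles sum to less than $2\pi$. The Cauchy--Crofton route is the cleaner write-up and also gives the statement for arbitrary convex curves, not just polygons; the one step you should not wave at is why a generic great circle $g$ meets $\partial X$ in at most two points. This follows because $X \cap g$ is a single closed sub-arc of $g$: any two points of $X$ on $g$ are non-antipodal (as $X$ lies strictly inside a hemisphere), so the geodesic joining them is the minor arc of $g$ itself and lies in $X$. With that lemma in place, $\mathrm{length}(\partial X) \le \frac{1}{2}\cdot 2\cdot \mu(G) = 2\pi$ as you say, and the normalization check against a great circle pins down the constant. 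Either version would be a worthwhile addition to the paper, since the cited fact is doing real work in the proof of Theorem~\ref{thm:mains2gen}.
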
 
 Note that since the diameter (the farthest distance between any two points) of the small latitude spherical rectangles is $O(1/k)$, we have that any point in $X$ is at most $O(1/k)$ far from $Q_1$. Now, $Q_1$ contains the following body $Q$. $Q$ is formed by the intersection of sides resulting from shifting each side of $X$ in a perpendicular way by the smallest distance $\de$ inwards such that $Q \subseteq Q_1$. Thus, we have $\de=O(1/k)$. Clearly, $Q \subseteq Q_1 \subseteq X$. Also,  $|\mu(Q_1)-\mu(X)|\leq |\mu(Q)-\mu(X)|$. Now, by construction, if $l$ is the perimeter of $X$, then $|\mu(Q)-\mu(X)|=O(l/k)=O(1/k)$ using Fact \ref{clm:perimeter}. The lemma now follows by minimizing the two errors by choosing $k = \frac{1}{\sqrt{\D(P)}}$. 
\end{proof}
Our construction in fact easily generalizes to yield low discrepancy point samplings in $\S^2_{[\theta_1,\theta_2],[\phi_1,\phi_2]}$ against all convex sets, where (using spherical co-ordinates) $$\S^2_{[\theta_1,\theta_2],[\phi_1,\phi_2]} = \{ (1,\theta,\phi) \in \S^2: \theta \in [\theta_1,\theta_2], \phi \in [\phi_1,\phi_2] \} $$
\begin{cor}[bounded range]\label{cor:s2} For all constants $\theta_1,\theta_2,\phi_1,\phi_2$ such that $0 \le \theta_1 < \theta_2 \le \pi$, $0 \le \phi_1 < \phi_2 \le 2\pi$, there exists an efficient deterministic sampling of $N$ points in $\S^2_{[\theta_1,\theta_2],[\phi_1,\phi_2]}$ with discrepancy $\O\l(\sqrt{\log N/N}\r)$ with respect to all convex sets.\\
\end{cor}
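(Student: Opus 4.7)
The plan is to reduce Corollary \ref{cor:s2} to Theorem \ref{thm:s2} by pulling back the bounded spherical range through the Lambert map. Observe that the Lambert map $\phi$ from Section 2.3 sends the azimuthal angle to $2\pi x$ and the height coordinate to $1-2y = \cos\theta$. Hence the preimage of $\S^2_{[\theta_1,\theta_2],[\phi_1,\phi_2]}$ under $\phi$ is the axis-aligned rectangle
\[
R \;=\; \left[\tfrac{\phi_1}{2\pi},\tfrac{\phi_2}{2\pi}\right] \times \left[\tfrac{1-\cos\theta_1}{2},\tfrac{1-\cos\theta_2}{2}\right] \;\subseteq\; [0,1]^2.
\]
Let $P_R$ be the affine image on $R$ of an $N$-point Sobol (or Hammersley) sampling of the unit square. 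Since $\theta_1,\theta_2,\phi_1,\phi_2$ are constants, $R$ has constant area, so by rescaling $P_R$ has discrepancy $\O(\log N / N)$ against axis-aligned sub-rectangles of $R$. Because $\phi$ is area-preserving, the image $P := \phi(P_R) \subseteq \S^2_{[\theta_1,\theta_2],[\phi_1,\phi_2]}$ inherits discrepancy $\O(\log N / N)$ against the class of latitude spherical rectangles contained in the bounded range.

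Next, I would replay the reduction from Theorem \ref{thm:mains2gen}, restricted to the bounded range, to upgrade this into a discrepancy bound against all spherical convex sets inside $\S^2_{[\theta_1,\theta_2],[\phi_1,\phi_2]}$. The separating great-circle argument of Lemma \ref{lem:hp} is entirely intrinsic to $\S^2$ and does not care about the bounded range, so the reduction from arbitrary spherical convex sets to spherical convex polygons goes through unchanged (the inner/outer convex-hull and half-plane witnesses $Q_1,Q_2$ of a convex set inside the range are themselves contained in the range). Then, gridding $[\theta_1,\theta_2]\times[\phi_1,\phi_2]$ by $k$ equispaced latitudes and $k$ equispaced longitudes yields $k^2$ latitude spherical rectangles, and convexity again forces the sandwich approximations $Q_1 \subseteq X \subseteq Q_2$ to each be expressible as a disjoint union of at most $k$ such rectangles (for each pair of consecutive latitudes, the set of columns intersecting a convex $X$ forms a contiguous range). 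This gives a combinatorial error of at most $k \cdot \D(P)$.

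For the geometric error $|\mu(Q_i) - \mu(X)|$, I would use that the perimeter of any spherical convex polygon is at most $2\pi$ (Fact \ref{clm:perimeter}), together with the fact that the diameter of each small cell is $\O(1/k)$ since the grid spacing is determined by the constant-size range $[\theta_1,\theta_2]\times[\phi_1,\phi_2]$. Exactly as in the proof of Theorem \ref{thm:mains2gen}, this yields $|\mu(Q_i) - \mu(X)| = \O(1/k)$. Optimizing by setting $k = 1/\sqrt{\D(P)} = \Theta(\sqrt{N/\log N})$ balances the two error terms at $\O(\sqrt{\log N / N})$.

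The only mild subtlety is the paper's insistence that a spherical convex set be contained strictly in a hemisphere: one should verify that the grid-aligned witnesses $Q_1, Q_2$ remain spherically convex (or can be partitioned into such pieces) when the bounded range crosses the equator, but since the argument proceeds cell-by-cell this poses no real obstruction. Beyond that, every step is a routine localization of the proofs of Theorem \ref{thm:s2} and Theorem \ref{thm:mains2gen}.
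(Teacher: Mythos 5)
Your proposal is correct and follows essentially the same route as the paper's (much terser) proof sketch: identify the preimage of $\S^2_{[\theta_1,\theta_2],[\phi_1,\phi_2]}$ under the Lambert map as an axis-aligned rectangle, rescale the low-discrepancy point set to that rectangle, and rerun the argument of Theorem~\ref{thm:mains2gen} on the bounded range. The additional details you supply (the explicit rectangle $\bigl[\phi_1/2\pi,\phi_2/2\pi\bigr]\times\bigl[(1-\cos\theta_1)/2,(1-\cos\theta_2)/2\bigr]$, the cell-by-cell verification, and the hemisphere caveat) are all consistent with what the paper leaves implicit.
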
 
\begin{proof}[Proof sketch] It can be shown that the  pre-image of  $\S^2_{[\theta_1,\theta_2],[\phi_1,\phi_2]}$ under the  Lambert equal-area projection map $\phi$ is an axis-aligned rectangle in $[0,1] \times [0,1]$. The result now follows by using an appropriate scaling of the point collection from Theorem $\ref{thm:mains2gen}$.
\end{proof}
\subsection{$\SO(3):$ The group of rotations in $3$ dimensions}
We now come to the second contribution of this work. We construct a low discrepancy point set with respect to a fairly general class.

Using the collection of low discrepancy points samplings developed in the previous subsection, we now construct a set of low discrepancy point samplings in $\SO(3)$. We shall first need some preliminaries.

\paragraph{Fiber bundles and local Cartesian product.}
We shall first introduce the notion of a local Cartesian product and a fiber bundle. We closely follow \cite{hopflec} while stating the definitions. A map $\pi: E \rightarrow B$ is a fiber bundle with fiber $F$ if, $\forall b \in B$, there is an open set $U \ni b$ such that $\pi^{-1}(U)\stackrel{\sim}{=}F \otimes U$ (Thus, $E$ is a local Cartesian product). In fact, every \emph{fiber} $\pi^{-1}(b)$ is homeomorphic to $F$. Here $B$ is called the base space. Equivalently, we say $E$ is the total space of a fiber bundle $\pi:E \rightarrow B$ with fiber $F$, or $E=F \tilde{\otimes} B$. That is, $E=\cup_{x \in B} F_x$ where each $F_x \stackrel{\sim}{=}F$. Any $F' \subseteq F$ and $B' \subseteq B$ would inherit the local Cartesian product in a natural way; that is, $F' \tilde{\otimes} B'= \cup_{x \in B'} F'_x$ where each $F_x \supseteq F'_x \stackrel{\sim}{=}F' \subseteq F$.

\paragraph{Example 1:} Let $E=B \otimes F$ (that is, a Cartesian product) and $\pi$ be the projection on the first component. Clearly, $E$ is a fiber bundle over $B$. This is called a trivial bundle.

\paragraph{Example 2:} The Mobius strip $E$ is the total space of a fiber bundle $\pi:E \rightarrow \S^1$ with fiber $[0,1]$. This is a nontrivial fiber bundle. 
Thus, $E=\cup_{x \in \S^1} F_x$ where each $F_x \stackrel{\sim}{=}[0,1]$ but the $F_x$ have different angles which is why there is a global twist. If they were all parallel, it would be a Cartesian product and we would have a hollow cylinder.

\paragraph{Example 3:} The Hopf fibration for $\SO(3)$ (described in detail later) is typically one of the oldest nontrivial examples of a fibration.

\paragraph{Discrepancy for local Cartesian products.}
We are now ready to state our generalization of discrepancy for Cartesian product spaces \cite{Chazelle04thediscrepancy} to local Cartesian product spaces that have a separable volume element. That is, $d\mu(E)=cd\mu(F)d\mu(B)$ whenever $E=F \tilde{\otimes} B$. Note that we can allow any distortion $c$. Also, we assume that each fiber has the same measure to avoid pathological cases since we can have $[a,b] \stackrel{\sim}{=} \mathbb{R}$ but they have different measures.

We first state the result on Cartesian products (for comparison) which is clearly a special case of local Cartesian product. 
\begin{thm}[\cite{Chazelle04thediscrepancy}]\label{thm:cart}
Let $E=F \otimes B$. Let $\X \subseteq 2^F$ and $\Y \subseteq 2^B$ be any collection of sets and define $\C = \cup \{F' \otimes B':F' \in \X, B' \in \Y  \}$. Let $Q, R$ be a point collection such that $\D(Q,\X) \leq \eps_1$ and $\D(R,\Y) \leq \eps_2$. Then \[\D(Q \otimes R, \C) \leq \eps_1+\eps_2\]
\end{thm}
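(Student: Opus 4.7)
The plan is a direct hybrid argument based on the product structure. Fix any set $F' \otimes B' \in \C$ with $F' \in \X$ and $B' \in \Y$. The first observation is that the product point set meets the product region in a product set:
\[
(Q \otimes R) \cap (F' \otimes B') \;=\; (Q \cap F') \otimes (R \cap B'),
\]
so $|(Q \otimes R) \cap (F' \otimes B')| = |Q \cap F'| \cdot |R \cap B'|$, and similarly the product measure factors as $\mu(F' \otimes B')/\mu(E) = (\mu(F')/\mu(F))\cdot(\mu(B')/\mu(B))$ by separability of the volume element (the distortion constant $c$ cancels with the normalization).

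Next I would introduce the abbreviations $a = |Q \cap F'|/|Q|$, $\alpha = \mu(F')/\mu(F)$, $b = |R \cap B'|/|R|$, $\beta = \mu(B')/\mu(B)$. By the hypotheses $|a-\alpha| \le \eps_1$ and $|b-\beta| \le \eps_2$, and all four quantities lie in $[0,1]$. The discrepancy of $Q \otimes R$ at $F' \otimes B'$ is exactly $|ab - \alpha\beta|$, which I would bound by the standard telescoping
\[
|ab - \alpha\beta| \;\le\; |ab - \alpha b| + |\alpha b - \alpha\beta| \;=\; b\,|a-\alpha| + \alpha\,|b-\beta| \;\le\; \eps_1 + \eps_2,
\]
using $b, \alpha \in [0,1]$. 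Taking the maximum over $F' \otimes B' \in \C$ yields $\D(Q \otimes R, \C) \le \eps_1 + \eps_2$.

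There is no real obstacle here; the only things to verify carefully are that the point count and measure factor simultaneously (both rely on the fact that $\C$ consists of genuine products, not arbitrary measurable subsets of $E$) and that the separable volume element gives the clean product form for the normalized measure. Both are immediate from the hypothesis $E = F \otimes B$ with $d\mu(E) = c\, d\mu(F)\, d\mu(B)$.
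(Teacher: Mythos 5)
Your proof is correct and matches the approach the paper itself uses: the paper cites this result but proves it (in generalized form) via the same telescoping/hybrid bound, both in Theorem~\ref{thm:localcart} and in the Appendix proof of Claim~\ref{claim:cartprod}, whose $n=2$ induction step is exactly your inequality $|ab-\alpha\beta|\le b\,|a-\alpha|+\alpha\,|b-\beta|$. Nothing further is needed.
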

We now state the main theorem of this section. We believe this might be of independent interest.
\begin{thm}\label{thm:localcart}
Let $E=F \tilde{\otimes} B$ be a local Cartesian product with a separable volume element and fibers of same measure. Let $\X \subseteq 2^F$ and $\Y \subseteq 2^B$ be any collection of sets and define $\C = \cup \{F' \tilde{\otimes} B':F' \in \X, B' \in \Y  \}$. Let $Q, R$ be a point collection such that $\D(Q,\X) \leq \eps_1$ and $\D(R,\Y) \leq \eps_2$. Then \[\D(Q \tilde{\otimes} R, \C) \leq \eps_1+\eps_2\]
\end{thm}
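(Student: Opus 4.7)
The plan is to mimic the proof of the Cartesian-product case (Theorem~\ref{thm:cart}), after observing that under the two hypotheses (separable volume element and fibers of equal measure) the measure and counting quantities factor in exactly the same way as in a true product. Write $P = Q \tilde{\otimes} R$, and for any $F' \in \X$, $B' \in \Y$ set $X = F' \tilde{\otimes} B' \in \C$. The target is to bound
\[
\left| \frac{|P \cap X|}{|P|} - \frac{\mu(X)}{\mu(E)} \right|
\]
by $\eps_1 + \eps_2$.

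First I would establish the two factorization identities. For the measures: since the volume element is separable, $\mu(X) = c\,\mu(F')\,\mu(B')$ and $\mu(E) = c\,\mu(F)\,\mu(B)$, so
\[
\frac{\mu(X)}{\mu(E)} = \frac{\mu(F')}{\mu(F)} \cdot \frac{\mu(B')}{\mu(B)}.
\]
For the counts: by definition $P = \bigcup_{r \in R} Q_r$ where $Q_r \subseteq F_r$ is the image of $Q$ under the homeomorphism $F \stackrel{\sim}{=} F_r$, and $X = \bigcup_{r \in B'} F'_r$. Hence $P \cap X = \bigcup_{r \in R \cap B'} (Q_r \cap F'_r)$, and since $F'_r$ is the image of $F'$ under the same homeomorphism, $|Q_r \cap F'_r| = |Q \cap F'|$ for every $r$. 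Summing gives $|P \cap X| = |Q \cap F'| \cdot |R \cap B'|$ and $|P| = |Q| \cdot |R|$, so
\[
\frac{|P \cap X|}{|P|} = \frac{|Q \cap F'|}{|Q|} \cdot \frac{|R \cap B'|}{|R|}.
\]

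Now the standard add-and-subtract trick finishes it. Let $a = |Q \cap F'|/|Q|$, $\alpha = \mu(F')/\mu(F)$, $b = |R \cap B'|/|R|$, $\beta = \mu(B')/\mu(B)$, all of which lie in $[0,1]$. Then
\[
ab - \alpha\beta \;=\; a(b-\beta) + \beta(a-\alpha),
\]
and therefore
\[
\left| \frac{|P \cap X|}{|P|} - \frac{\mu(X)}{\mu(E)} \right| \;\le\; a\,|b-\beta| + \beta\,|a-\alpha| \;\le\; |b-\beta| + |a-\alpha| \;\le\; \eps_2 + \eps_1,
\]
using $a,\beta \le 1$ and the discrepancy hypotheses on $Q$ and $R$. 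Taking the maximum over $X \in \C$ yields the theorem.

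The only non-routine point is step two: justifying that the count $|P \cap X|$ factors as $|Q \cap F'|\cdot|R \cap B'|$. This is where the local (as opposed to global) product structure must be used carefully, and where the equal-measure-of-fibers assumption does its work, since one needs the fiberwise homeomorphisms $F \stackrel{\sim}{=} F_r$ to consistently send $Q$ to $Q_r$ and $F'$ to $F'_r$ so that the per-fiber count is independent of $r$. Once this bookkeeping is in place, the remainder is identical to the Cartesian argument and no additional ideas are required.
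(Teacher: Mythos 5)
Your proposal is correct and follows essentially the same route as the paper: both decompose $X$ into fibers, use the constancy of the per-fiber count $|Q_x\cap F'_x|=|Q\cap F'|$ and the equal-fiber-measure plus separability hypotheses to reduce everything to a product of two discrepancies, and finish with the standard add-and-subtract bound $|ab-\alpha\beta|\le|a-\alpha|+|b-\beta|$. The only cosmetic difference is that the paper keeps the sum over $x\in R\cap B'$ explicit and telescopes inside it, whereas you collapse the sum to the product $|Q\cap F'|\cdot|R\cap B'|$ first; the content is identical.
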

\begin{proof}Let $X = F' \stackrel{\sim}{\otimes} B'$ be any set in $\C$. From the definition of local Cartesian products we have,  $X =  \cup_{x \in B'} F'_x$ where each $F_x \supseteq F'_x \stackrel{\sim}{=}F' \subseteq F$. Similarly we have $Q \tilde{\otimes} R$.

Fix an element $x \in B'$. Consider the subsets $F'_x$ and $Q_x$ of the fiber $F_x$. Note that $\mu(F'_x)=\mu(F')$ for all $x$.
Thus we have
 \begin{eqnarray*}
 \D(Q_x,F'_x) &=&  \l|\frac{|Q_x \cap F'_x|}{|Q_x|}-\frac{\mu(F'_x)}{\mu(F_x)}\r| \\  
 			&	&\l|\frac{|Q \cap F'|}{|Q|}-\frac{\mu(F')}{\mu(F)}\r| \\  
				&&\le \epsilon_1
\end{eqnarray*}
Using the above estimate, we now bound the discrepancy of the point collection $P=Q \tilde{\otimes} R$.
\begin{eqnarray*}
\D(P,X) &= &\l|\frac{|P \cap X|}{|P|}-\frac{\mu(X)}{\mu(E)}\r| \\  
               &=&\l| \frac{1}{|R|}\l(\sum_{x \in  R \cap B'} \frac{|Q_x \cap F'_x|}{|Q_x|}\r)-\frac{\mu(X)}{\mu(E)}\r| \\  
               & =&\l|\frac{1}{|R|}\l(\sum_{x \in  R \cap B'} \frac{|Q_x \cap F'_x|}{|Q_x|} - \frac{\mu(F'_x)}{\mu(F_x)} \r)+  \l(\frac{|R \cap B'|\mu(F')}{|R| \cdot \mu(F)} -  \frac{\mu(X)}{\mu(E)}\r)\r| \\   
                &  \leq &\frac{|R \cap B'|}{|R|} \eps_1 +  \l|\frac{|R \cap B'|\mu(F')}{|R| \cdot \mu(F)} -  \frac{\mu(X)}{\mu(E)}\r| \\  
                & =&\frac{|R \cap B'|}{|R|} \eps_1 +  \l(\frac{\mu(F')}{\mu(F)}\r)\l|\frac{|R \cap B'|}{|R|} -  \frac{\mu(B')}{\mu(B)}\r| \\  
                & =&\frac{|R \cap B'|}{|R|} \eps_1 +  \l(\frac{\mu(F')}{\mu(F)}\r)\eps_2 \\  
               & \le & \eps_1 +\eps_2
\end{eqnarray*}
\end{proof}

\paragraph{Discrepancy in $\SO(3)$ and the Hopf fibration.}
In topology, the \emph{Hopf fibration} (or Hopf coordinates), introduced by Heinz Hopf in 1931 \cite{hopf}, is used to describe $\S^3$ in terms of a local Cartesian product of $\S^1$ and $\S^2$ which are much easier to visualize. More precisely, we have $\SO(3) \stackrel{\sim}{=} \S^1 \tilde{\otimes} \S^2$, that is $\SO(3)$ is the total space of a fiber bundle $\pi:\SO(3) \rightarrow \S^2$ with fiber $\S^1$. Think of a point in $\S^2$ as specifying the axis of rotation and correspoding to each point, there is a circle which specifies the rotation around that axis.  

As mentioned in the introduction, the target class for low discrepancy sets we consider is a local Cartesian product of convex sets on $\S^1$ and $\S^2$, which is more general than a strict Cartesian product. We call it the class of local Cartesian convex sets. More precisely we have $\X = \cup \{X_1 \tilde{\otimes} X_2:X_i \ convex,\ i=1,2 \}$.

We restate the first main theorem of this work now.\\ \\
\textbf{Theorem \ref{thm:so3} (Restated). }There exists an efficiently generated collection of points $P$ such that $\D(P,\C)=\O\l(\frac{\log^{2/3} N}{N^{1/3}}\r)$ where $\C$ is the class of local Cartesian convex sets.\\

\begin{proof}
Recall that $\SO(3) = \S^1 \tilde{\otimes} \S^2$. Note that $d\mu(\SO(3))=(1/8)d\mu(\S^1)d\mu(\S^2)$ and each fiber has the same length \cite{Yershova_Jain_Lavalle_Mitchell_2010}. Let $P_1$ be a collection of $N_1$ points arranged on the circle such that any two consecutive points are the same distance apart (Lemma~\ref{lem:s1}). Let $P_2$ be the point collection on $\S^2$ described in the previous section. Then by Theorem \ref{thm:s2}, we have $\L(P_2)=O(\sqrt{\log N_2/N_2}$). Now, our proposed point collection for small discrepancy on $\SO(3)$ is $P_1 \tilde{\otimes} P_2$. 
From Theorem \ref{thm:localcart}, we have
$ \L(P) = O(1/N_1)+O(\log N_2/\sqrt{N_2})$. 
Setting $N_2=(N\log N)^{2/3}$ ($N_1=(N/\log^2 N)^{1/3}$) proves the theorem.
\end{proof}
We can in fact construct low discrepancy point samplings for a more general set.

 Define 
$$\SO(3)_{([\theta_1,\theta_2]),([\theta_1^\prime,\theta_2^\prime],[\phi_1^{\prime},\phi_2^{\prime}])} = \S^1_{[\theta_1,\theta_2]} \tilde{\otimes} \S^2_{[\theta_1^{\prime},\theta_2^{\prime}],[\phi_1,\phi_2]}. $$

\begin{cor}[bounded range]\label{cor:so3}For all constants $\theta_1,\theta_2,\theta_1^{\prime},\theta_2^{\prime},\phi_1,\phi_2$ such that $0 \le \theta_1 < \theta_2 \le 2 \pi$, $0 \le \theta_1 < \theta_2 \le \pi$ and $0 \le \phi_1 < \phi_2 \le 2\pi$, there exists an efficient deterministic sampling of $N$ points in $\SO(3)_{([\theta_1,\theta_2]),([\theta_1^\prime,\theta_2^\prime],[\phi_1^{\prime},\phi_2^{\prime}])}$ with discrepancy $\O\l(\frac{\log^{2/3} N}{N^{1/3}}\r)$ against all local Cartesian convex sets.
\end{cor}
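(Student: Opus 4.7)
The plan is to mimic the proof of Theorem~\ref{thm:so3} with the full spaces replaced by their bounded-range counterparts. The key observation is that the local Cartesian product structure, separable volume element, and constant fiber measure established for $\SO(3) \stackrel{\sim}{=} \S^1 \tilde{\otimes} \S^2$ all descend to the restricted region $\SO(3)_{([\theta_1,\theta_2]),([\theta_1^\prime,\theta_2^\prime],[\phi_1^{\prime},\phi_2^{\prime}])}$: restricting the base to $\S^2_{[\theta_1^{\prime},\theta_2^{\prime}],[\phi_1^{\prime},\phi_2^{\prime}]}$ and each fiber to $\S^1_{[\theta_1,\theta_2]}$ preserves the Hopf fibration locally and keeps every fiber with the same arclength $\theta_2-\theta_1$. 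Hence Theorem~\ref{thm:localcart} applies directly to this restricted space.

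Concretely, I would first invoke Corollary~\ref{cor:s1} to construct a deterministic set $P_1$ of $N_1$ points in $\S^1_{[\theta_1,\theta_2]}$ with $\D(P_1) = \O(1/N_1)$ against contiguous intervals, which in particular bounds the discrepancy against convex subsets of $\S^1_{[\theta_1,\theta_2]}$ (convex sets on an interval of the circle are exactly arcs). Next, I would invoke Corollary~\ref{cor:s2} to construct a deterministic set $P_2$ of $N_2$ points in $\S^2_{[\theta_1^{\prime},\theta_2^{\prime}],[\phi_1^{\prime},\phi_2^{\prime}]}$ with $\D(P_2) = \O(\sqrt{\log N_2 / N_2})$ against all spherical convex sets in that bounded range. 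The proposed point collection is then $P := P_1 \tilde{\otimes} P_2$, with total size $N = N_1 N_2$.

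Applying Theorem~\ref{thm:localcart} with $\eps_1 = \O(1/N_1)$ and $\eps_2 = \O(\sqrt{\log N_2/N_2})$ yields
\[
\D(P,\C) \;=\; \O\!\left(\frac{1}{N_1}\right) + \O\!\left(\sqrt{\frac{\log N_2}{N_2}}\right),
\]
where $\C$ is the class of local Cartesian convex sets in the restricted space. Balancing the two error terms subject to $N_1 N_2 = N$ (exactly as in Theorem~\ref{thm:so3}) by choosing $N_2 = (N\log N)^{2/3}$ and $N_1 = (N/\log^2 N)^{1/3}$ gives the claimed bound $\O(\log^{2/3} N / N^{1/3})$.

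The only non-routine step is verifying that the hypotheses of Theorem~\ref{thm:localcart} carry over to the bounded region, and this is straightforward: the Hopf map restricted to $\pi^{-1}(\S^2_{[\theta_1^{\prime},\theta_2^{\prime}],[\phi_1^{\prime},\phi_2^{\prime}]})$ remains a fiber bundle over its image with $\S^1$ fibers; truncating each fiber to $\S^1_{[\theta_1,\theta_2]}$ preserves the constant-fiber-measure property because the restriction to a fixed angular arc is the same on every fiber; and separability of the volume element is a local condition inherited from the full $\SO(3)$ case. Everything else is a direct application of results already established in the paper.
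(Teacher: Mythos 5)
Your proposal is correct and follows exactly the route the paper intends (the paper states this corollary without writing out a proof, but the construction is plainly meant to be Corollary~\ref{cor:s1} on $\S^1_{[\theta_1,\theta_2]}$ plus Corollary~\ref{cor:s2} on the restricted $\S^2$, glued by Theorem~\ref{thm:localcart} with the same parameter balancing as in Theorem~\ref{thm:so3}). Your extra verification that the restricted region inherits the local Cartesian product structure, separable volume element, and constant fiber measure is a welcome bit of care that the paper leaves implicit.
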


\subsection{$\SE(2):$ The group of translation and rotation in $2$ dimensions}
We consider the equivalent group defined by the cartesian product of $\SO(2)$ and $\T(2)$ where we define the volume element naturally to be the product of the volume elements of the two underlying groups. Then, by the previous results and Theorem~\ref{thm:cart}, we have the following. We point out that this is not new and is merely stated for completeness. 
\begin{thm}\label{thm:se2}There exists an efficient deterministic sampling of $N$ points, $P$ in $\SE(2)$ that satisfies $$\D(P,\C_1)=\tilde{\O}\l(N^{-1/2}\r),$$ $$\D(P,\C_2)=\tilde{\O}\l(N^{-1/3}\r),$$ where $\C_1=\mathcal{I} \otimes \mathcal{R}$ and $\C_2=\mathcal{I} \otimes \L$.\end{thm}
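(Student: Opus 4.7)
The plan is to apply the direct product discrepancy bound (Theorem~\ref{thm:cart}) to the Cartesian decomposition $\SE(2) = \SO(2) \otimes \T(2)$, using the low discrepancy point sets for the two factors that were recalled at the start of Section~\ref{sec:basic}. Since the volume element on $\SE(2)$ is by assumption the product of the volume elements on $\SO(2)$ and $\T(2)$, Theorem~\ref{thm:cart} applies directly, and the only real work is to choose the split of points between the two factors so as to balance the two error terms.

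Concretely, I would take $P = P_1 \otimes P_2$ with $|P_1| = N_1$ points on $\SO(2) \stackrel{\sim}{=} \S^1$ from Lemma~\ref{lem:s1} (achieving discrepancy $\eps_1 = \O(1/N_1)$ against the class $\mathcal{I}$ of arcs) and $|P_2| = N_2$ points in $\T(2)$ from Theorem~2.1, so that $N = N_1 N_2$. For $\C_1 = \mathcal{I} \otimes \mathcal{R}$, the $\T(2)$ factor contributes $\eps_2 = \O(N_2^{-1}\log N_2)$, and Theorem~\ref{thm:cart} gives an overall discrepancy of $\O(1/N_1 + \log N_2 / N_2)$. Balancing the two terms under $N_1 N_2 = N$ yields $N_1 \approx \sqrt{N/\log N}$, $N_2 \approx \sqrt{N\log N}$, for a total of $\tilde{\O}(N^{-1/2})$. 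For $\C_2 = \mathcal{I} \otimes \L$, the $\T(2)$ factor contributes $\eps_2 = \O(N_2^{-1/2}(\log N_2)^{1/2})$, so Theorem~\ref{thm:cart} gives $\O(1/N_1 + \sqrt{\log N_2}/\sqrt{N_2})$; balancing with $N_1 \approx \sqrt{N_2}$ gives $N_2 \approx N^{2/3}$, $N_1 \approx N^{1/3}$, and an overall bound of $\tilde{\O}(N^{-1/3})$.

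There is no real obstacle here, since the authors explicitly remark that the statement is included for completeness and is not new; the only mild subtlety is verifying that the product class of sets on $\SE(2)$ we care about is exactly the class $\C = \cup\{F' \otimes B' : F' \in \X,\, B' \in \Y\}$ to which Theorem~\ref{thm:cart} applies, which is immediate from the definitions of $\C_1$ and $\C_2$. Thus the proof reduces to plugging the two building-block point sets and their known discrepancy bounds into Theorem~\ref{thm:cart} and optimizing the split $N = N_1 N_2$.
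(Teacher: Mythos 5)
Your proposal is correct and follows exactly the paper's (very terse) argument: decompose $\SE(2)$ as $\SO(2) \otimes \T(2)$, plug the known point sets for $\S^1$ and $[0,1]^2$ into Theorem~\ref{thm:cart}, and balance the split $N = N_1 N_2$. Your explicit optimization of $N_1, N_2$ is a detail the paper omits, and your arithmetic checks out for both $\C_1$ and $\C_2$.
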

We record an easily generalization of the above theorem. Define
$$\SE(2)_{[\theta_1,\theta_2],[\phi_1,\phi_2]} =  \S^2_{[\theta_1,\theta_2],[\phi_1,\phi_2]} \otimes \T(2). $$

\begin{cor}[bounded range]\label{cor:se2}
For all constants $\theta_1,\theta_2,\phi_1,\phi_2$ such that $0 \le \theta_1 < \theta_2 \le \pi$, $0 \le \phi_1 < \phi_2 \le 2\pi$,  there exists an efficient deterministic sampling of $N$ points, $P$ in $\SE(2)_{[\theta_1^\prime,\theta_2^\prime],[\phi_1^{\prime},\phi_2^{\prime}]}$ that has discrepancy $\tilde{\O}\l(N^{-1/2}\r)$ against  $\mathcal{I} \otimes \mathcal{R}$ and
discrepancy $\tilde{\O}\l(N^{-1/3}\r)$ against $\mathcal{I} \otimes \L$.
\end{cor}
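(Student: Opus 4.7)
My proof plan mirrors the derivation of Theorem~\ref{thm:se2} verbatim, with each factor's full-range construction replaced by its bounded-range counterpart and the two factors combined through the Cartesian-product discrepancy bound of Theorem~\ref{thm:cart}. Because the set $\SE(2)_{[\theta_1^\prime,\theta_2^\prime],[\phi_1^{\prime},\phi_2^{\prime}]}$ is by definition a strict Cartesian product of a restricted rotational factor with the translational factor $\T(2)$, both target classes $\C_1=\mathcal{I}\otimes\mathcal{R}$ and $\C_2=\mathcal{I}\otimes\mathcal{L}$ decompose exactly in the form required by Theorem~\ref{thm:cart}.

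First, I would take a bounded-range low-discrepancy sampling $P_1$ of $N_1$ points on the rotational factor $\S^2_{[\theta_1^\prime,\theta_2^\prime],[\phi_1^\prime,\phi_2^\prime]}$, supplied directly by Corollary~\ref{cor:s2}; the resulting discrepancy against the rotational class $\mathcal{I}$ matches the full-range rate up to constants that depend only on the fixed ranges. Second, for the translational factor I would take the standard Sobol (or Hammersley) $N_2$-point set on $[0,1]^2$, affinely rescaled to the desired bounded axis-aligned rectangle in $\R^2$; this yields discrepancy $\O(N_2^{-1}\log N_2)$ against $\mathcal{R}$ and $\O(N_2^{-1/2}\log^{1/2} N_2)$ against $\mathcal{L}$. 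Affine rescaling preserves both combinatorial classes and multiplies every Lebesgue measure by the same Jacobian, so these bounds transfer verbatim.

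Third, I would form the product sampling $P := P_1 \otimes P_2$ with $|P|=N_1 N_2=N$ and apply Theorem~\ref{thm:cart} to each target class, yielding
\[\D(P,\C_1)\le \D(P_1,\mathcal{I})+\D(P_2,\mathcal{R}), \qquad \D(P,\C_2)\le \D(P_1,\mathcal{I})+\D(P_2,\mathcal{L}).\]
Balancing the two error terms in each inequality with the same asymptotic choices of $N_1,N_2$ as in the proof of Theorem~\ref{thm:se2} then recovers the advertised rates $\tilde{\O}(N^{-1/2})$ and $\tilde{\O}(N^{-1/3})$. I do not expect any genuine obstacle here: the substantive discrepancy work has already been done in Corollary~\ref{cor:s2} and in Theorem~3.8 of~\cite{book:neid}, so the only task is the routine bookkeeping of verifying that restricting to a fixed range and rescaling the translational factor each introduce only multiplicative constants depending on $\theta_i^\prime,\phi_i^\prime$, leaving the asymptotic rates intact.
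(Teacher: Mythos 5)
Your overall strategy is exactly the one the paper intends: the paper gives no explicit proof of Corollary~\ref{cor:se2} beyond "an easy generalization," and the implicit argument is precisely yours --- replace each factor by its bounded-range counterpart, tensor the point sets, and invoke Theorem~\ref{thm:cart}. Your treatment of the translational factor and the final appeal to the product bound are fine.

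There is, however, one concrete quantitative problem in your choice of rotational factor. You take $P_1$ from Corollary~\ref{cor:s2}, i.e.\ a bounded-range sampling of $\S^2$ whose discrepancy against convex sets is $\tilde{\O}\l(N_1^{-1/2}\r)$. With that input, Theorem~\ref{thm:cart} gives $\D(P,\mathcal{I}\otimes\mathcal{R}) \le \tilde{\O}\l(N_1^{-1/2}\r)+\tilde{\O}\l(N_2^{-1}\r)$, and no choice of $N_1 N_2=N$ balances this better than $\tilde{\O}\l(N^{-1/3}\r)$; the advertised $\tilde{\O}\l(N^{-1/2}\r)$ rate is unreachable. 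The rates in Theorem~\ref{thm:se2} (and hence in this corollary) require the rotational component to contribute $\O(1/N_1)$, which is what you get because $\SE(2)$ is built from $\SO(2)\cong\S^1$ with $\mathcal{I}$ the class of contiguous arcs: the correct bounded-range ingredient is Corollary~\ref{cor:s1} (equispaced points on $\S^1_{[\theta_1,\theta_2]}$, discrepancy $\O(1/N_1)$ against arcs), not Corollary~\ref{cor:s2}. To be fair, you inherited this from the paper, whose displayed definition of $\SE(2)_{[\theta_1,\theta_2],[\phi_1,\phi_2]}$ writes $\S^2_{[\theta_1,\theta_2],[\phi_1,\phi_2]}\otimes\T(2)$ --- almost certainly a typo for the $\S^1$ factor, since otherwise the stated class $\mathcal{I}$ and the stated rates are inconsistent with the construction. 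Your proof becomes correct, and identical to the paper's intended one, once you swap in Corollary~\ref{cor:s1} for the rotational factor and keep the same balancing $N_1=N_2=\sqrt{N}$ for $\mathcal{I}\otimes\mathcal{R}$ and $N_1=N^{1/3}$, $N_2=N^{2/3}$ for $\mathcal{I}\otimes\L$.
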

\subsection{$\SE(3):$ The group of translation and rotation in $3$ dimensions}
Finally we consider $\SE(3)$ which captures all the above subgroups. As in the case of $\SE(2)$ we look at the equivalent $\SO(3) \otimes \T(3)$ and use our bounds for the underlying groups and combine them using Theorem~\ref{thm:cart}. We have the following. Let $\mathcal{R}, \L$ be the class of axis aligned cubes and convex sets in $3$ dimensions. Let $\C$ be the class of local Cartesian product of convex sets of $\S^1$ and $\S^2$.
\begin{thm}\label{thm:se3}There exists an efficient deterministic sampling of $N$ points, $P$ in $\SE(3)$ that satisfies $$\D(P,\C_1)=\tilde{\O}\l(N^{-1/4}\r),$$ $$\D(P,\C_2)=\tilde{\O}\l(N^{-1/6}\r),$$ where $\C_1=\C \otimes \mathcal{R}$ and $\C_2=\C \otimes \L$.
\end{thm}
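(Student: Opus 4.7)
The strategy is to write $\SE(3) \stackrel{\sim}{=} \SO(3) \otimes \T(3)$ as a (genuine) Cartesian product with the product measure, build a low discrepancy point collection on each factor using the results already proved in the paper, take their Cartesian product, and then invoke Theorem \ref{thm:cart} to combine the discrepancies additively. The only real choice to be made is how to split the budget of $N$ points between the two factors, which I optimize separately for $\C_1$ and $\C_2$.

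First, for the $\SO(3)$ component I use the point collection from Theorem \ref{thm:so3}: $N_1$ points with discrepancy $\tilde{O}(N_1^{-1/3})$ against the class $\C$ of local Cartesian convex sets. For the $\T(3)$ component I use the Niederreiter-style construction (Theorem 3.8 of \cite{book:neid}): $N_2$ points with discrepancy $\tilde{O}(N_2^{-1})$ against axis-aligned boxes $\mathcal{R}$, and discrepancy $\tilde{O}(N_2^{-1/3})$ against convex sets $\L$. Applying Theorem \ref{thm:cart} to the product point set $P = P_1 \otimes P_2$ yields
\[
\D(P, \C \otimes \mathcal{R}) = \tilde{O}(N_1^{-1/3}) + \tilde{O}(N_2^{-1}),\qquad \D(P, \C \otimes \L) = \tilde{O}(N_1^{-1/3}) + \tilde{O}(N_2^{-1/3}),
\]
subject to $N = N_1 N_2$.

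For the first bound I balance $N_1^{-1/3} \asymp N_2^{-1}$, i.e.\ $N_2 = N_1^{1/3}$, so $N = N_1^{4/3}$ and $N_1 = N^{3/4}$, giving discrepancy $\tilde{O}(N^{-1/4})$. For the second bound the balancing is $N_1 \asymp N_2$, i.e.\ $N_1 = N_2 = N^{1/2}$, giving discrepancy $\tilde{O}(N^{-1/6})$. Both match the claimed rates.

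The only nontrivial step is verifying that Theorem \ref{thm:cart} legitimately applies: this requires the product measure on $\SE(3)$ to factor (which it does, since the Haar measure on $\SE(3) \stackrel{\sim}{=} \SO(3) \otimes \R^3$ is exactly the product of the Haar measure on $\SO(3)$ with Lebesgue measure on $\R^3$, up to normalization), and the target classes $\C_1, \C_2$ to be of the Cartesian product form required by Theorem \ref{thm:cart}, which is built into their definitions. I do not expect any genuine obstacle — this is a clean combination of the already-established bounds — so the proof will be short and essentially just record the above optimization.
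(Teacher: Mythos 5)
Your proposal is correct and follows exactly the route the paper takes: the paper's proof of Theorem \ref{thm:se3} is the single remark that one views $\SE(3)$ as $\SO(3)\otimes\T(3)$ and combines the bounds for the factors via Theorem \ref{thm:cart}, leaving the split of $N$ between the factors implicit. Your explicit balancing ($N_1=N^{3/4}$ for $\C_1$ and $N_1=N^{1/2}$ for $\C_2$) correctly recovers the stated exponents, with the same (shared, minor) caveat that the two bounds are then witnessed by differently balanced point sets rather than a single one.
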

We can in fact efficiently generate low discrepancy point samples in a more general space. 

Define
$$\SE(3)_{([\theta_1,\theta_2]),([\theta_1^\prime,\theta_2^\prime],[\phi_1^{\prime},\phi_2^{\prime}])} =  \SO(3)_{([\theta_1,\theta_2]),([\theta_1^\prime,\theta_2^\prime],[\phi_1^{\prime},\phi_2^{\prime}])}  \otimes \T(3). $$

\begin{cor}[bounded range]\label{cor:se3}
For all constants $\theta_1,\theta_2,\theta_1^{\prime},\theta_2^{\prime},\phi_1,\phi_2$ such that $0 \le \theta_1 < \theta_2 \le 2 \pi$, $0 \le \theta_1 < \theta_2 \le \pi$ and $0 \le \phi_1 < \phi_2 \le 2\pi$,  there exists an efficient deterministic sampling of $N$ points, $P$ in $\SE(3)_{([\theta_1,\theta_2]),([\theta_1^\prime,\theta_2^\prime],[\phi_1^{\prime},\phi_2^{\prime}])}$ that has discrepancy $\tilde{\O}\l(N^{-1/4}\r)$ against $\C \otimes \mathcal{R}$ and
discrepancy $\tilde{\O}\l(N^{-1/6}\r)$ against $\C \otimes \L $.
\end{cor}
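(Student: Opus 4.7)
The plan is to realize $\SE(3)_{([\theta_1,\theta_2]),([\theta_1^\prime,\theta_2^\prime],[\phi_1^{\prime},\phi_2^{\prime}])}$ as the honest Cartesian product $\SO(3)_{([\theta_1,\theta_2]),([\theta_1^\prime,\theta_2^\prime],[\phi_1^{\prime},\phi_2^{\prime}])} \otimes \T(3)$ (by definition) and invoke Theorem~\ref{thm:cart} on the two factors, feeding in a bounded-range sample for the rotational factor from Corollary~\ref{cor:so3} and a classical low discrepancy sample for $\T(3)$. The constant ranges enter only as a rescaling of the uniform measure and do not affect the rates; they were already absorbed in Corollary~\ref{cor:so3}, and for $\T(3)$ one simply rescales a Sobol/Hammersley-type sample from $[0,1]^3$ into the relevant box.

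Concretely, I would pick $N_1$ points in the rotational factor and $N_2$ points in the translational factor so that $N = N_1 N_2$ is the desired total sample size. Corollary~\ref{cor:so3} gives discrepancy $\tilde{\O}(N_1^{-1/3})$ against the class $\C$ of local Cartesian convex sets on $\SO(3)$. For the translational factor, the construction for $\T(3)$ from Section~2.1 gives discrepancy $\tilde{\O}(N_2^{-1})$ against axis aligned boxes $\mathcal{R}$ and $\tilde{\O}(N_2^{-1/3})$ against convex sets $\L$. Applying Theorem~\ref{thm:cart} to the Cartesian product gives
\[ \D(P, \C \otimes \mathcal{R}) \;\le\; \tilde{\O}(N_1^{-1/3}) + \tilde{\O}(N_2^{-1}), \qquad \D(P, \C \otimes \L) \;\le\; \tilde{\O}(N_1^{-1/3}) + \tilde{\O}(N_2^{-1/3}). \]

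To finish, I would balance the two error terms in each bound. For $\C \otimes \mathcal{R}$, equating $N_1^{-1/3}$ with $N_2^{-1}$ and using $N_1 N_2 = N$ yields $N_1 = N^{3/4}$, $N_2 = N^{1/4}$, producing the claimed $\tilde{\O}(N^{-1/4})$ rate. For $\C \otimes \L$, both terms have exponent $1/3$, so one balances by taking $N_1 = N_2 = N^{1/2}$, producing the claimed $\tilde{\O}(N^{-1/6})$ rate.

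I do not expect any step to be a serious obstacle: the definition of the bounded-range $\SE(3)$ already splits it as a strict Cartesian product, so Theorem~\ref{thm:cart} applies without needing the more delicate Theorem~\ref{thm:localcart}, and everything reduces to choosing sample sizes. The only point requiring a brief justification is that scaling the $\T(3)$ sample to an arbitrary axis aligned box preserves the asymptotic discrepancy rate against $\mathcal{R}$ and $\L$ (the constants $\theta_i, \phi_i$ disappear into the $\tilde{\O}$ notation), which is standard and parallel to the argument sketched in Corollary~\ref{cor:s2}.
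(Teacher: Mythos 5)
Your proposal is correct and follows exactly the route the paper (implicitly) takes: write the bounded-range $\SE(3)$ as the honest Cartesian product of the bounded-range $\SO(3)$ from Corollary~\ref{cor:so3} with $\T(3)$, combine via Theorem~\ref{thm:cart}, and balance $N_1,N_2$ to get the $\tilde{\O}(N^{-1/4})$ and $\tilde{\O}(N^{-1/6})$ rates. The only cosmetic remark is that the translational factor in the definition is the full $\T(3)$, so no rescaling of the Sobol/Hammersley sample is even needed there.
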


\section{Low Discrepancy Point Samplings in product spaces}\label{sec:product}
Let $\U$ be the universe. Let $P \subseteq \U$ be a low discrepancy point collection with respect to a class $\C$. We show how to bootstrap $P$ to get a low discrepancy point samplings in  $\U^{n}$ with respect to $\C^n$. An obvious candidate point collection is the product set $P^n$. It follows from Theorem~\ref{thm:cart} that $\D(P^n,\C^n) \le n \D(P,\C)$. However the size of this point collection is $|P|^{n}$ which increases exponentially in $n$. 
 
We get an almost exponential improvement in the size of the point collection using the following idea. We use low discrepancy sets (of size $m$, say) in the basis groups and take a direct product of the sets to obtain a new set of size $m^n$. Now, we apply another level of discrepancy minimization to further choose a subset of the $m^n$ points to still ensure the right discrepancy. The right tool to make the second step work is a pseudorandom generator for combinatorial rectangles constructed in  \cite{GMRZ_13}. Using the results from the first part of the paper, this leads to a host of nontrivial constructions for low discrepancy sets in product spaces of rigid motion groups.
 
 We begin with some preliminaries.

\subsection{Combinatorial rectangles}

\begin{define}
We say that a function $G:\zo^s \rightarrow \U$ is a \emph{pseudorandom generator} for a class $\cal F$ of functions from $\U$ to $\zo$
with error $\eps$ if,
for every $f \in \cal F$,
\[ |\Pr[f(y) =1] - Pr[f(G(x)) = 1]| \leq \epsilon,\]
where $x$ and $y$ are chosen uniformly at random from $\zo^s$ and $\U$, respectively.
The quantity $s$ is called the \emph{seed length} of~$G$.
\end{define}
Let $[r]$ denote the set $\{1,2,\cdots, r\}$ for any integer $r >0$.
\begin{define}
Let $m,n >0$ be positive integers. A subset of $[m]^n$ of the form $I_1 \otimes I_2 \otimes  \cdots  \otimes I_n$, where $I_j \subseteq [m]$ for all $j \in [n]$ is called a combinatorial rectangle in $[m]^n$.
\end{define}
The following pseudorandom generator for combinatorial rectangles  was constructed in \cite{GMRZ_13}.
\begin{thm}[\cite{GMRZ_13}] There exists an efficient construction of a pseudorandom generator for combinatorial rectangles in $[m]^n$ with seed length $$\O\l((\log \log m)\cdot \log (nm/\eps)\r)+\O\l((\log 1/\eps) (\log \log 1/\eps) (\log \log \log 1/\eps))\r)$$ and error $\epsilon$.
\end{thm}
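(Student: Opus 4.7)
The plan is to follow the pseudorandom-restriction framework of Gopalan--Meka--Reingold--Zuckerman. The high-level structure has two phases: an iterative simplification of the rectangle test under cleverly chosen ``mild'' pseudorandom restrictions, which drives down the effective number of coordinates, followed by an off-the-shelf base-case generator once the residual problem is small enough to afford one. The decomposition of the claimed seed length into two summands is guided precisely by this two-phase structure: $O((\log\log m)\cdot \log(nm/\eps))$ for the $O(\log\log m)$ restriction rounds, and $O((\log 1/\eps)(\log\log 1/\eps)(\log\log\log 1/\eps))$ for the single base-case invocation.

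For the restriction phase, I would design one step that reduces an ambient problem on $[m]^n$ to one on $[m]^{n'}$ with $n' = n/\polylog(1/\eps)$. The step selects a pseudorandom subset $T \subseteq [n]$ of ``live'' coordinates using an almost $k$-wise independent distribution on $\zo^n$, and then fills in the coordinates in $[n]\setminus T$ with a pseudorandom assignment drawn from a short-seed distribution that already fools rectangles on $[m]^{n-|T|}$ with error $O(\eps/\log\log m)$. The correctness argument amounts to showing that for every rectangle $I_1 \otimes \cdots \otimes I_n$, the surviving density $\prod_{j \notin T}(|I_j|/m)$ evaluated on the pseudorandom assignment approximates its true expectation up to additive error $\eps/\log\log m$; I would prove this by passing to logarithms and separating coordinates with $|I_j|/m$ close to $1$ (where concentration is easy) from coordinates with $|I_j|/m$ small (where the rectangle is essentially already rejected, so only a crude bound is needed). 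Iterating this step $O(\log\log m)$ times shrinks $n$ to a quantity depending only on $\log(1/\eps)$, at a seed cost of $O(\log(nm/\eps))$ per round and total error bounded by a union bound across rounds.

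For the base case, I would apply an optimized INW-style generator for read-once width-$m$ branching programs to the residual low-dimensional rectangle test. Combining the Nisan--Zuckerman recycling idea with the standard INW construction gives seed length $O((\log 1/\eps)(\log\log 1/\eps)(\log\log\log 1/\eps))$ in this regime, matching the second summand. The main obstacle, and the technical heart of the construction, will be the concentration analysis of the pseudorandom restriction: one must additively fool the product $\prod_j (|I_j|/m)$ simultaneously for every pattern of high- and low-density coordinates, which forces a delicate interplay between the parameter $k$ in the almost $k$-wise independence for $T$ and the error of the rectangle-fooling distribution used on the fixed coordinates. Calibrating these two ingredients so that the per-round error stays at $O(\eps/\log\log m)$ while the per-round seed length stays at $O(\log(nm/\eps))$ — and not blowing either up by an extra $\log\log$ factor over $O(\log\log m)$ rounds — is what makes the analysis tight enough to hit the stated bound.
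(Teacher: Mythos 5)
This statement is not proved in the paper: it is quoted from \cite{GMRZ_13} and used as a black box (through Corollary~\ref{cor:crprg}) in the product-space construction of Section~\ref{sec:product}. There is therefore no in-paper proof to compare yours against; the only question is whether your sketch would reconstruct the cited result, and as written it would not. You correctly identify the pseudorandom-restriction framework behind the two summands of the seed length, but you defer precisely the step that constitutes the proof --- the concentration lemma showing that a pseudorandomly chosen set of live coordinates plus a rectangle-fooling assignment to the dead ones preserves $\prod_{j}\bigl(|I_j|/m\bigr)$ up to $\eps/\log\log m$ --- explicitly labelling it the ``technical heart'' without supplying it. A proposal that names its main lemma as an unresolved obstacle is a plan, not a proof.

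There is also a concrete arithmetic gap in the skeleton itself. You claim each round maps $[m]^n$ to $[m]^{n'}$ with $n'=n/\polylog(1/\eps)$, and that $O(\log\log m)$ rounds reduce the dimension to a quantity depending only on $\log(1/\eps)$. After $t$ rounds the dimension is $n/(\log(1/\eps))^{O(t)}$; with $t=O(\log\log m)$ and, say, $m=n$ and constant $\eps$, this leaves $n/\polylog(n)$ live coordinates, nowhere near $\poly(\log(1/\eps))$. So either each round must shrink the problem far more aggressively than stated, or the progress measure cannot be literal dimension reduction --- in the cited work the restrictions simplify the \emph{test} (driving the rectangle to negligible acceptance probability, or to one with few non-full coordinates) rather than shrinking $n$ to $\polylog(1/\eps)$. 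As it stands, the round count, the per-round error budget of $\eps/\log\log m$, and the base-case regime do not compose into the stated seed length, so the decomposition you propose does not yet yield the theorem.
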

\begin{cor}\label{cor:crprg} There exists an efficiently constructible collection of points in $[m]^n$ of size $$\l(mn/\eps\r)^{\O\l(\log \log m+(\log \log 1/\eps)(\log \log \log 1/\eps)\r)}$$ which has discrepancy $\epsilon$ against the class of combinatorial rectangles in $[m]^n$.
\end{cor}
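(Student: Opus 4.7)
The plan is to obtain the point collection by enumerating all seeds of the pseudorandom generator from the preceding theorem. First I would invoke that theorem to get an efficient map $G:\{0,1\}^s \to [m]^n$ that fools combinatorial rectangles with error $\eps$, where
\[ s \;=\; O\bigl((\log\log m)\log(nm/\eps)\bigr) + O\bigl((\log 1/\eps)(\log\log 1/\eps)(\log\log\log 1/\eps)\bigr), \]
and then set $P := \{G(x) : x \in \{0,1\}^s\}$, viewed as a multiset of size $|P| = 2^s$. Efficient constructibility is automatic: just enumerate the $2^s$ seeds and evaluate $G$ on each.

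Next I would translate the PRG guarantee into the discrepancy guarantee. For any combinatorial rectangle $R = I_1 \otimes \cdots \otimes I_n \subseteq [m]^n$, the indicator function $f_R$ lies in the class the generator is required to fool, so
\[ \bigl|\Pr_{y \sim [m]^n}[y \in R] \;-\; \Pr_{x \sim \{0,1\}^s}[G(x) \in R]\bigr| \;\le\; \eps. \]
The left-hand probability equals $\mu(R)/\mu([m]^n)$, and the right-hand probability equals $|P \cap R|/|P|$ once one counts intersections with multiplicity (equivalently, the fraction of seeds whose image lies in $R$). This is exactly the quantity inside the definition of $\D(P,\cdot)$, giving $\D(P,\mathcal{R}) \le \eps$ where $\mathcal{R}$ denotes the class of combinatorial rectangles.

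Finally, I would convert $|P| = 2^s$ into the stated bound. Using $\log(1/\eps) \le \log(nm/\eps)$ trivially, the second summand of $s$ is at most $O\bigl((\log\log 1/\eps)(\log\log\log 1/\eps)\,\log(nm/\eps)\bigr)$, so
\[ s \;\le\; O\bigl((\log\log m + (\log\log 1/\eps)(\log\log\log 1/\eps))\cdot \log(nm/\eps)\bigr), \]
which exponentiates to give $|P| \le (nm/\eps)^{O(\log\log m + (\log\log 1/\eps)(\log\log\log 1/\eps))}$, as required. The argument is essentially a syntactic passage from the PRG formulation to the discrepancy formulation, so there is no real obstacle; the only bookkeeping step is the multiplicity convention in the definition of $\D$ (or, alternatively, the observation that $\Pr_x[G(x)\in R]$ is well-defined regardless of collisions among the images) and the single line of algebra absorbing $\log(1/\eps)$ into $\log(nm/\eps)$.
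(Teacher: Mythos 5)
Your proof is correct and is exactly the argument the paper intends (the paper states the corollary without proof, as the standard seed-enumeration consequence of the PRG theorem): enumerate all $2^s$ seeds, note that fooling the indicator of each combinatorial rectangle is precisely the discrepancy bound for the resulting multiset, and absorb the $\log(1/\eps)$ factor into $\log(nm/\eps)$ to get the stated size. Nothing is missing.
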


\subsection{Derandomization using Combinatorial rectangles}

In this section we prove a general de-randomization result to construct low discrepancy point collections in product sets which beats the trivial tensor product construction described above.

 Let  $\U$ be an universe.  For $i \in [n]$,  let  $Q_i$ be a collection of points of size $m$ with discrepancy $\epsilon_i$ against a collection of sets $\X_i$ in $2^{\U}$. Define the class: $$\X_{[n]}=\{ (X_1,X_2,\cdots,X_n): X_i \in \X_i \} $$ We construct low discrepancy points for this class.
 
\begin{thm}\label{thm:genproduct}   For $i \in [n]$, let $\sigma_i : [m] \rightarrow Q_i$ be arbitrary efficiently computable injective maps. Further let $P_{R}$ be the point collection with discrepancy $\epsilon_{R}$ against combinatorial rectangles in $[m]^n$ that we get from Corollary \ref{cor:crprg}. For any point $p_{R} = (a_1, \ldots,a_n) \in P_{R} \subset [m]^n$, define \linebreak $\sigma(p_{R})=(\sigma_1(a_1),\ldots,\sigma_i(a_i),\ldots,\sigma_n(a_n)) \in Q_1 \otimes Q_2 \ldots \otimes Q_n$. Also define the set  $Q_{[n]}= (Q_1 \otimes \ldots \otimes Q_n) \circ P_{R} = \{ \sigma(p_{R}) : p_{R} \in P_{R}\}$.

Then,
$$\Di(Q_{[n]},\X^n) \le \epsilon_{R} + \sum_{i=1}^n \epsilon_i$$

\end{thm}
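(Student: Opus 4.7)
My plan is to reduce the discrepancy of $Q_{[n]}$ against the product class $\X_{[n]}$ to the discrepancy of $P_R$ against combinatorial rectangles in $[m]^n$ by pulling sets back coordinatewise along the $\sigma_i$. Fix any target set $X = X_1 \times \cdots \times X_n$ with $X_i \in \X_i$, and for each $i$ let $I_i = \sigma_i^{-1}(Q_i \cap X_i) \subseteq [m]$. Since $\sigma_i$ is injective, $|I_i| = |Q_i \cap X_i|$, and for any seed point $p_R \in [m]^n$ we have $\sigma(p_R) \in X$ if and only if $p_R \in I_1 \times \cdots \times I_n$. Consequently
$$\frac{|Q_{[n]} \cap X|}{|Q_{[n]}|} \;=\; \frac{|P_R \cap (I_1 \times \cdots \times I_n)|}{|P_R|},$$
and this is to be compared against $\mu(X)/\mu(\U^n) = \prod_{i=1}^n \mu(X_i)/\mu(\U)$.

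The next step is to apply Corollary~\ref{cor:crprg}: since $I_1 \times \cdots \times I_n$ is a combinatorial rectangle in $[m]^n$,
$$\left| \frac{|P_R \cap (I_1 \times \cdots \times I_n)|}{|P_R|} \;-\; \prod_{i=1}^n \frac{|I_i|}{m} \right| \;\le\; \epsilon_R.$$
Setting $\alpha_i = |I_i|/m = |Q_i \cap X_i|/|Q_i|$ and $\beta_i = \mu(X_i)/\mu(\U)$, the discrepancy hypothesis on each $Q_i$ gives $|\alpha_i - \beta_i| \le \epsilon_i$, while both $\alpha_i,\beta_i \in [0,1]$. It then remains to control $\bigl| \prod_i \alpha_i - \prod_i \beta_i \bigr|$ and assemble via the triangle inequality.

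That last step is a standard hybrid/telescoping identity,
$$\prod_{i=1}^n \alpha_i \;-\; \prod_{i=1}^n \beta_i \;=\; \sum_{j=1}^n \left( \prod_{i<j} \alpha_i \right)(\alpha_j - \beta_j)\left( \prod_{i>j} \beta_i \right),$$
each summand of which is bounded in absolute value by $\epsilon_j$ because the surrounding products lie in $[0,1]$; so the total product error is at most $\sum_i \epsilon_i$. Chaining this with the PRG bound yields $\Di(Q_{[n]}, \X_{[n]}) \le \epsilon_R + \sum_i \epsilon_i$, as desired. I do not anticipate a real obstacle: the only item requiring a moment of care is the set-theoretic identity $\sigma^{-1}(Q_{[n]} \cap X) = P_R \cap (I_1 \times \cdots \times I_n)$, which uses the coordinatewise injectivity of $\sigma$ so that no collisions inflate the intersection counts, and the consistency between the product normalization $\mu(\U^n) = \mu(\U)^n$ and the uniform measure on $[m]^n$ underlying the PRG guarantee.
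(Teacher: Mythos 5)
Your proposal is correct and follows essentially the same route as the paper: both insert the hybrid quantity $\prod_i |Q_i\cap X_i|/|Q_i|$, bound the first gap by the combinatorial-rectangle guarantee on $P_R$ via the injective coordinatewise correspondence, and bound the second gap by a telescoping estimate giving $\sum_i \epsilon_i$. The only cosmetic difference is that you write the telescoping identity in closed form while the paper phrases it as an induction on $n$ (Claim~\ref{claim:cartprod} in the Appendix); these are the same argument.
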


\begin{proof}
Fix any $X_{[n]} =  X_1 \otimes  \ldots \otimes X_n $, where $X_i \in \X_i$.
\begin{align*}
\Di(Q_{[n]},X_{[n]}) &=   \left|\frac{| Q_{[n]} \cap X_{[n]} |}{|Q_{[n]}|} - \frac{\mu(X_{[n]})}{\mu(\U^n)}\right| \\
                      &=  \left|\frac{| Q_{[n]} \cap ((Q_1 \cap X_1) \otimes 
                      			 \ldots (Q_{n} \cap X_{n})  |}{|Q_{[n]}|} - \frac{\mu(X_{[n]})}{\mu(\U^n)}\right| \ \ \text{(since $ Q_{[n]} \subseteq Q_1 \otimes \ldots \otimes Q_n$)} \\
			&=  \left|\frac{| Q_n \cap  \prod_{i=1}^{n}(Q_i \cap X_i)|}{|Q_{[n]}|} -   \frac{\prod_{i=1}^{n}|Q_i \cap X_i|}{\prod_{i=1}^n|Q_i|} +    \frac{\prod_{i=1}^{n}|Q_i \cap X_i|}{\prod_{i=1}^n|Q_i|} - \frac{\mu(X)}{\mu(\U^n)}\right| \\
			&\le \underbrace{\left|\frac{| Q_n \cap  \prod_{i=1}^{n}(Q_i \cap X_i)|}{|Q_{[n]}|} -   \frac{\prod_{i=1}^{n}|Q_i \cap X_i|}{\prod_{i=1}^n|Q_i|}\right|}_\textrm{A} +
			\underbrace{\left|\frac{\prod_{i=1}^{n}|Q_i \cap X_i|}{\prod_{i=1}^n|Q_i|} - \frac{\mu(X)}{\mu(\U^n)}\right|}_\textrm{B} \\
\end{align*}
We now bound the terms $A$ and $B$. \newline
 Let $X_i^{\prime} = Q_i \cap X_i$ and $R_i = \sigma(X_i^{\prime}) \subseteq [m]$ for each $i \in [n]$.  \begin{claim} $A \le \epsilon_{R}$.
 \end{claim}
 \begin{proof}
 Recalling that each $\sigma_i$ is an injective map, we have
 \begin{align*}
 A &= \left|\frac{| Q_{[n]} \cap  \prod_{i=1}^{n}(Q_i \cap X_i)|}{|Q_{[n]}|} -   \frac{\prod_{i=1}^{n}|X_i^{\prime}|}{\prod_{i=1}^n|Q_i|}\right|  \\
    &=  \left|\frac{| \sigma(Q_{[n]}) \cap  \prod_{i=1}^{n}\sigma(X_i^{\prime})|}{|\sigma(Q_n)|} -   \frac{\prod_{i=1}^{n}|\sigma(X_i^{\prime})|}{\prod_{i=1}^n|Q_i|}\right|  \\
    &= \left| \frac{|P_{R} \cap \prod_{i=1}^{n}R_i| }{|P_{R}|} - \frac{ \prod_{i=1}^{n}|R_i| }{ m^n} \right| \\
    &\le \epsilon_{R}
 \end{align*}
 where the last inequality follows from the fact that $P_R$ has discrepancy $\epsilon_R$ against combinatorial rectangles.
 \end{proof}
  \begin{claim}\label{claim:cartprod} $B    \le \sum_{i=1}^{n} \epsilon_i$
 \end{claim}
Claim $\ref{claim:cartprod}$ follows directly by observing that $B= \D(Q_{[n]},X_{[n]})$ and using Theorem $\ref{thm:localcart}$ (or the result in \cite{chazelle_discrep_book}). 
Thus $\D(Q_{[n]},\X_{[n]}) \le \epsilon_{R} + \sum_{i=1}^n \epsilon_i$.
\end{proof}
For the sake of completeness we include a direct proof of Claim $\ref{claim:cartprod}$ in the Appendix.
\subsection{Instantiations of the main theorem}
We now state as corollaries, the efficient deterministic construction of various low discrepancy point collections in product spaces. It is easy to see that we can obtain low discrepancy samplings for product spaces the hypertorus $\l(\S^1\r)^n$, $\SO(3)^n$, $\SE(3)^n$, $\SE(2)^n$ and others. In general, we can construct efficiently construct low discrepancy point collections against any mixed products of the base classes.  We instantiate our technique for classes: $\SO(3)^n$, $\SE(3)^n$, the mixed product space $\SO(3) \otimes \SE(2) \otimes  \ldots \otimes \SE(2) \otimes \SO(3)$ and a product space with constraints on allowed rotations.

\paragraph{Discrepancy in $\SO(3)^n$.}
\begin{cor} There is an efficient deterministic sampling of $N$ points $P$ in $\SO(3)^n$ satisfying $\D(P,\C^n) \leq \eps$ where $\C$ is the class of local Cartesian convex sets in $\SO(3)$ and $$N= \l(\frac{n}{\epsilon}\r)^{\O\l(\log \log n+\log \log (1/\eps)\log \log \log (1/\eps)\r)}.$$
\end{cor}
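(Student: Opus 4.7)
The plan is to instantiate the general derandomization framework of Theorem~\ref{thm:genproduct} with the base point set on $\SO(3)$ from Theorem~\ref{thm:so3}. All the work is essentially in choosing parameters correctly so the final bound matches the stated expression.

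First I would fix the base universe and class: take $\U = \SO(3)$ and $\X = \C$ (local Cartesian convex sets), and use Theorem~\ref{thm:so3} to produce a point collection $Q \subseteq \SO(3)$ of size $m$ with $\D(Q,\C) = \O(\log^{2/3} m / m^{1/3})$. I would then pick $m$ just large enough that this per-coordinate discrepancy is at most $\eps/(2n)$, which gives $m = \tilde{\O}\bigl((n/\eps)^3\bigr)$, so in particular $\log m = \O(\log(n/\eps))$ and $\log\log m = \O(\log\log n + \log\log(1/\eps))$.

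Next I would apply Theorem~\ref{thm:genproduct} with $Q_i = Q$ for all $i \in [n]$, using the injection $\sigma_i : [m] \to Q$ given by any fixed enumeration of $Q$, and with $P_R$ the combinatorial rectangle point set from Corollary~\ref{cor:crprg} having error $\eps_R = \eps/2$ on $[m]^n$. The resulting collection $Q_{[n]} = (Q \otimes \cdots \otimes Q) \circ P_R \subseteq \SO(3)^n$ then satisfies, by Theorem~\ref{thm:genproduct},
\[
\D(Q_{[n]}, \C^n) \;\leq\; \eps_R + \sum_{i=1}^n \D(Q,\C) \;\leq\; \frac{\eps}{2} + n \cdot \frac{\eps}{2n} \;=\; \eps.
\]

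Finally I would read off the size from Corollary~\ref{cor:crprg}: $|Q_{[n]}| = |P_R| = (mn/\eps_R)^{\O(\log\log m + \log\log(1/\eps_R)\log\log\log(1/\eps_R))}$. Substituting $m = \poly(n/\eps)$, $\eps_R = \eps/2$, and the estimate $\log\log m = \O(\log\log n + \log\log(1/\eps))$, the exponent collapses (after absorbing constants) into $\O(\log\log n + \log\log(1/\eps)\log\log\log(1/\eps))$, and the base $mn/\eps_R$ is $\poly(n/\eps)$, yielding the claimed bound $N = (n/\eps)^{\O(\log\log n + \log\log(1/\eps)\log\log\log(1/\eps))}$. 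There is no real obstacle here; the only thing to check carefully is the arithmetic of how $\log\log m$ reduces to $\log\log n + \log\log(1/\eps)$ and that the $\log(n/\eps)$-in-the-exponent coming from the first additive term of the seed length in Corollary~\ref{cor:crprg} can be combined with the base $mn/\eps_R$ without changing the form of the stated bound.
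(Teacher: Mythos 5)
Your proposal is correct and follows exactly the paper's route: the paper's proof is a one-line citation of Theorem~\ref{thm:genproduct} together with Theorem~\ref{thm:so3}, and you have simply filled in the parameter arithmetic (choosing $m$ so the per-coordinate discrepancy is $\eps/(2n)$, setting $\eps_R = \eps/2$, and reading the size off Corollary~\ref{cor:crprg}) that the paper leaves implicit. The absorption of the $\log\log m = \O(\log\log n + \log\log(1/\eps))$ term into the stated exponent is handled correctly.
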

\begin{proof}
The proof follows from Theorem \ref{thm:genproduct} and Theorem \ref{thm:so3}.

%Recall that from Theorem \ref{thm:so3}, we have a collection of points $P$ of size $m$ such that $\L(P) = \epsilon_1 = \frac{\log^{2/3}m}{m^{\frac{1}{3}}}$. Our proposed collection of points is $P_n = P \circ P_{R}$. We can choose  $\epsilon_1 = \frac{\epsilon}{2n}$ and $\epsilon_{R} = \frac{\epsilon}{2}$ by Theorem \ref{thm:genproduct}. We have $m = (\frac{n}{\epsilon})^{3+o(1)}$ and  $N=(\frac{n}{\epsilon})^{O(\log^{1/2}(\frac{1}{\epsilon}))}$. 

\end{proof}

\paragraph{Discrepancy in $\SE(3)^n$.}
\begin{cor} There is an efficient deterministic sampling of $N$ points $P$ in $\SE(3)^n$ satisfying $\D(P,\C^n) \leq \eps$ where $\C$ is the class of product of local Cartesian convex sets in $\SO(3)$ and convex sets in $\R^3$ and $$N= \l(\frac{n}{\epsilon}\r)^{\O\l(\log \log n+\log \log (1/\eps)\log \log \log (1/\eps)\r)}.$$
\end{cor}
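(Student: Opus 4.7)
The plan is to mimic exactly the preceding corollary for $\SO(3)^n$: build a low discrepancy set in a single copy of $\SE(3)$ via Theorem~\ref{thm:se3}, take its $n$-fold product, and then sparsify using the combinatorial-rectangle pseudorandom generator supplied by Theorem~\ref{thm:genproduct}. Since $\SE(3) = \SO(3) \otimes \T(3)$ and $\C = \C_{\SO(3)} \otimes \L_{\R^3}$ is already a product class, no further structural work is needed; the argument is a clean instantiation.

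More concretely, I would first invoke Theorem~\ref{thm:se3} (the $\C_2 = \C_{\SO(3)} \otimes \L_{\R^3}$ bound) to obtain, for any $m$, a set $Q \subseteq \SE(3)$ of size $m$ with $\D(Q, \C) = \tilde{\O}(m^{-1/6})$. Choose $m$ so that this single-coordinate discrepancy is at most $\eps/(2n)$; it suffices to take $m = \tilde{\O}((n/\eps)^6)$, i.e. $m = \poly(n/\eps)$. Now set $Q_i = Q$ and $\X_i = \C$ for every $i \in [n]$, pick PRG error $\eps_R = \eps/2$, and apply Theorem~\ref{thm:genproduct}. The resulting set $P = (Q \otimes \cdots \otimes Q) \circ P_R$ lives in $\SE(3)^n$, and the theorem yields
\[ \D(P, \C^n) \;\le\; \eps_R + \sum_{i=1}^{n} \eps_i \;\le\; \frac{\eps}{2} + n \cdot \frac{\eps}{2n} \;=\; \eps, \]
as required.

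For the size bound, I would just read off Corollary~\ref{cor:crprg} with the above $m$ and $\eps_R$. The base of the exponential is $mn/\eps_R = \poly(n/\eps)$, so the size is $(n/\eps)^{\O(\text{exp})}$ where the exponent is $\log\log m + (\log\log(1/\eps_R))(\log\log\log(1/\eps_R))$. Since $m = \poly(n/\eps)$ gives $\log\log m = \O(\log\log n + \log\log(1/\eps))$, the first summand is absorbed into the stated bound, and we obtain
\[ |P| \;=\; \l(\frac{n}{\eps}\r)^{\O\l(\log\log n + \log\log(1/\eps)\cdot\log\log\log(1/\eps)\r)}, \]
matching the claim.

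There is no genuine obstacle: the proof is a routine chaining of Theorem~\ref{thm:se3} and Theorem~\ref{thm:genproduct}, identical in structure to the $\SO(3)^n$ corollary. The only book-keeping is to verify that $\log\log m$ stays subsumed by the $\log\log(1/\eps)\log\log\log(1/\eps)$ term once $m$ is chosen polynomially in $n/\eps$, which the computation above confirms.
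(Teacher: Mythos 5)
Your proposal is correct and is essentially the paper's own argument: the paper's proof is the one-line citation ``follows from Theorem~\ref{thm:genproduct} and Theorem~\ref{thm:se3},'' and you have simply filled in the routine parameter choices ($m = \poly(n/\eps)$ per coordinate, $\eps_R = \eps/2$) and verified that $\log\log m$ is absorbed into the stated exponent. No differences to report.
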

\begin{proof}
The proof follows from Theorem \ref{thm:genproduct} and Theorem \ref{thm:se3}.
\end{proof}

\paragraph{Discrepancy in $\SO(3) \otimes \SE(2)\otimes \ldots \otimes \SE(2) \otimes \SO(3)$.}
\begin{cor}There is an efficient deterministic sampling of $N$ points $P$ in \linebreak $\underbrace{\SE(2) \otimes \SO(3) \ldots \otimes \SE(2) \otimes \SO(3)}_{n}$  satisfying $\D(P,\C_1 \otimes \C_2\otimes \ldots \otimes \C_1 \otimes \C_2) \leq \eps$ where $\C_1$ is the class of convex sets in $\SO(3)$ and $C_2$ is the class of product of contiguous intervals and convex sets in $\R_2$ and $$N= \l(\frac{n}{\epsilon}\r)^{\O\l(\log \log n+\log \log (1/\eps)\log \log \log (1/\eps)\r)}.$$
\end{cor}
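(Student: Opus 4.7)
The plan is to directly instantiate Theorem~\ref{thm:genproduct} with alternating base universes, where the odd-indexed factors are $\SE(2)$ with target class $\C_2$ and the even-indexed factors are $\SO(3)$ with target class $\C_1$ (or vice versa, depending on how the $n$ positions are interleaved). For each $\SE(2)$ factor I would invoke Theorem~\ref{thm:se2} to produce a point collection $Q_i \subseteq \SE(2)$ of size $m$ achieving $\D(Q_i, \C_2) = \tilde{\O}(m^{-1/3})$, and for each $\SO(3)$ factor I would invoke Theorem~\ref{thm:so3} to produce $Q_i \subseteq \SO(3)$ of size $m$ with $\D(Q_i, \C_1) = \tilde{\O}(m^{-1/3})$. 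Each $Q_i$ has a natural injective enumeration $\sigma_i : [m] \to Q_i$.

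Next, I would choose $m$ so that each base-level discrepancy $\eps_i$ is at most $\eps/(2n)$. Since both rates are $\tilde{\O}(m^{-1/3})$, picking $m = \tilde{\O}\bigl((n/\eps)^{3}\bigr) = \poly(n/\eps)$ suffices, so that $\sum_{i=1}^{n} \eps_i \le \eps/2$. I would then apply the combinatorial-rectangle pseudorandom generator from Corollary~\ref{cor:crprg} over $[m]^n$ with error parameter $\eps_R = \eps/2$, obtaining a seed set $P_R \subseteq [m]^n$ of size
\[
|P_R| \;=\; \bigl(mn/\eps_R\bigr)^{\O(\log\log m + (\log\log 1/\eps_R)(\log\log\log 1/\eps_R))}.
\]
The final point collection is $Q_{[n]} = \{(\sigma_1(a_1),\ldots,\sigma_n(a_n)) : (a_1,\ldots,a_n) \in P_R\}$. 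By Theorem~\ref{thm:genproduct}, its discrepancy against the target class $\C_1 \otimes \C_2 \otimes \cdots \otimes \C_1 \otimes \C_2$ is bounded by $\eps_R + \sum_i \eps_i \le \eps$.

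It then remains to verify the quantitative size bound. Substituting $m = \poly(n/\eps)$ and $\eps_R = \eps/2$, we get $\log(mn/\eps_R) = \O(\log(n/\eps))$, $\log\log m = \O(\log\log n + \log\log(1/\eps))$, and $(\log\log 1/\eps_R)(\log\log\log 1/\eps_R) = \O(\log\log(1/\eps) \log\log\log(1/\eps))$. Combining these, the exponent collapses to $\O(\log\log n + \log\log(1/\eps)\log\log\log(1/\eps))$, and the base to $\O(n/\eps)$, matching the claimed $N = (n/\eps)^{\O(\log\log n + \log\log(1/\eps)\log\log\log(1/\eps))}$.

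The only mildly subtle step is the parameter balancing: I need each base-space sample to be individually small enough that the $n$ base errors sum to at most $\eps/2$, while keeping $m$ polynomially bounded in $n/\eps$ so that the $\log\log m$ term in the PRG exponent does not exceed the allowed $\log\log n + \log\log(1/\eps)$. I do not expect a genuine obstacle here, since the $\tilde{\O}(m^{-1/3})$ rates on both $\SE(2)$ and $\SO(3)$ are polynomial in $1/m$, so the required $m$ is $\poly(n/\eps)$ and the exponents work out cleanly. Everything else is a direct plug-and-play into the existing machinery, precisely analogous to the proofs of the $\SO(3)^n$ and $\SE(3)^n$ corollaries above.
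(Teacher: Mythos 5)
Your proposal is correct and matches the paper's approach exactly: the paper's proof is a one-line citation of Theorem~\ref{thm:genproduct} together with the base-space discrepancy bounds (Theorems~\ref{thm:se2} and~\ref{thm:so3}), and you have simply filled in the parameter balancing ($m=\poly(n/\eps)$ so the $n$ base errors sum to $\eps/2$, PRG error $\eps/2$) that the paper leaves implicit. The quantitative bookkeeping in your last two paragraphs is sound and the size bound works out as claimed.
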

\begin{proof}
The proof follows from Theorem \ref{thm:genproduct} and Theorem \ref{thm:se2}.
\end{proof}
\paragraph{Discrepancy in $\SO(3)^{([\alpha^{(1)}_1,\alpha^{(1)}_2]),([\theta_1^{(1)},\theta_2^{(1)}],[\phi_1^{(1)},\phi_2^{(1)}])} \otimes  \ldots \otimes \SO(3)^{([\alpha^{(n)}_1,\alpha^{(n)}_2]),([\theta_1^{(n)},\theta_2^{(n)}],[\phi_1^{(n)},\phi_2^{(n)}])}$.}
\begin{cor} For all constants $\alpha^{(i)}_1,\alpha^{(i)}_2,\theta^{(i)}_1,\theta^{(i)}_2,\phi^{(i)}_1,\phi^{(i)}_2$, $i \in \{ 1,\ldots,n\}$, such that $0 \le \alpha^{(i)}_1 < \alpha^{(i)}_2 \le 2\pi$, $0 \le \theta^{(i)}_1 < \theta^{(i)}_2 \le \pi$, $0 \le \phi^{(i)}_1 < \phi^{(i)}_2 \le 2\pi$, there is an efficient deterministic sampling of $N$ points $P$ in $\SO(3)^{([\alpha^{(1)}_1,\alpha^{(1)}_2]),([\theta_1^{(1)},\theta_2^{(1)}],[\phi_1^{(1)},\phi_2^{(1)}])} \otimes  \ldots \otimes \SO(3)^{([\alpha^{(n)}_1,\alpha^{(n)}_2]),([\theta_1^{(n)},\theta_2^{(n)}],[\phi_1^{(n)},\phi_2^{(n)}])}$ with discrepancy $\epsilon$ against $\C^1 \otimes \ldots \otimes \C^n$, $\C^{i}$ is the class of local Cartesian convex sets in \linebreak $\SO(3)^{([\alpha^{(i)}_1,\alpha^{(i)}_2]),([\theta_1^{(i)},\theta_2^{(i)}],[\phi_1^{(i)},\phi_2^{(i)}])}$ and $$N= \l(\frac{n}{\epsilon}\r)^{\O\l(\log \log n+\log \log (1/\eps)\log \log \log (1/\eps)\r)}.$$
\end{cor}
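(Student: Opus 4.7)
The plan is to combine the bounded-range $\SO(3)$ construction of Corollary~\ref{cor:so3} with the product-space derandomization of Theorem~\ref{thm:genproduct}, exactly as was done for the unrestricted $\SO(3)^n$ and $\SE(3)^n$ corollaries just above. The only new ingredient is that each of the $n$ coordinate spaces is now a bounded-range variant rather than the full group, but Corollary~\ref{cor:so3} already guarantees that each such bounded-range space admits an efficient deterministic point collection with the same $\tilde{\O}(N^{-1/3})$ discrepancy against local Cartesian convex sets as in the unrestricted case.

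Concretely, I would proceed as follows. First, for each $i \in [n]$, invoke Corollary~\ref{cor:so3} on $\SO(3)^{([\alpha^{(i)}_1,\alpha^{(i)}_2]),([\theta_1^{(i)},\theta_2^{(i)}],[\phi_1^{(i)},\phi_2^{(i)}])}$ to obtain an efficiently constructible point set $Q_i$ of size $m$ with discrepancy at most $\eps_i = \eps/(2n)$ against the class $\C^{(i)}$ of local Cartesian convex sets in the $i$-th bounded-range $\SO(3)$; solving $\tilde{\O}(m^{-1/3}) = \eps/(2n)$ gives $m = \poly(n/\eps)$. Second, fix injective labelings $\sigma_i : [m] \to Q_i$. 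Third, apply Theorem~\ref{thm:genproduct} with error parameter $\eps_R = \eps/2$ for the combinatorial-rectangle pseudorandom generator from Corollary~\ref{cor:crprg}; this yields a set $P_R \subseteq [m]^n$ of size $(mn/\eps)^{\O(\log\log m + \log\log(1/\eps)\log\log\log(1/\eps))}$, and the lifted collection $Q_{[n]} = (Q_1 \otimes \cdots \otimes Q_n) \circ P_R$ has discrepancy at most $\eps_R + \sum_i \eps_i \le \eps/2 + \eps/2 = \eps$ against the class $\C^{(1)} \otimes \cdots \otimes \C^{(n)}$.

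Finally, substituting $m = \poly(n/\eps)$ into the bound on $|P_R|$ collapses the $\log\log m$ term to $\O(\log\log(n/\eps))$, so the overall size becomes
\[
N \;=\; \l(\frac{n}{\eps}\r)^{\O\l(\log\log n + \log\log(1/\eps)\log\log\log(1/\eps)\r)},
\]
matching the stated bound. The only step requiring any thought is verifying that the bounded-range spaces fit the hypotheses of Theorem~\ref{thm:genproduct}: the theorem is stated for an arbitrary universe $\U$ with classes $\X_i \subseteq 2^\U$, but inspection of its proof shows it applies equally well when each coordinate has its own universe $\U_i$ and class $\X_i$, since the argument only uses Cartesian product structure and the marginal discrepancies. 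Thus no obstacle arises, and the corollary follows by the two-step recipe already used in the preceding instantiations.
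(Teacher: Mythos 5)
Your proposal is correct and follows exactly the paper's route: the paper's proof is a one-line citation of Theorem~\ref{thm:genproduct} together with the bounded-range Corollary~\ref{cor:so3}, and your write-up simply fills in the parameter choices ($\eps_i = \eps/(2n)$, $\eps_R = \eps/2$, $m = \poly(n/\eps)$) and the resulting size bound. Your side remark that Theorem~\ref{thm:genproduct} extends verbatim to coordinates with distinct universes $\U_i$ is a reasonable observation the paper leaves implicit.
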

\begin{proof}
The proof follows from Theorem \ref{thm:genproduct} and Theorem \ref{cor:so3}.
\end{proof}
%\section{Dispersion on SO3 and SO3n}

%To be figured out. Check if geodesic length is suited for Pythagoras theorem.

\bibliographystyle{alpha}
\bibliography{sampling}
%\begin{thebibliography}{99}
%\end{thebibliography}
\section*{Appendix}

\begin{proof}[Proof of Claim $\ref{claim:cartprod}$]
 Let $| X_i^{\prime}| = a_i$ and $\mu(X_i) = b_i$ for each $i \in [n]$ and $\mu(\U) = k $. Thus we have  $$B  = \left|  \frac{\prod_{i=1}^{n} a_i}{m^n} - \frac{\prod_{i=1}^{n}b_i}{k^n}\right| $$
 
 We use induction on $n$.
 For $n=1$,  we have by construction $$|\frac{a_1}{m} - \frac{b_1}{k}| \le \epsilon_1$$
 Let $n>1$. We recall that $|\frac{a_i}{m} - \frac{b_i}{k}| \le \epsilon_i$ for all $i \in [n]$. 
 
 We have
 \begin{align*}
 B &= \left|  \frac{\prod_{i=1}^{n} a_i}{m^n} - \frac{\prod_{i=1}^{n}b_i}{k^n} \right| \\
     &= \left|  \frac{\prod_{i=1}^{n} a_i}{m^n} - \frac{b_1}{k} \cdot \frac{\prod_{i=2}^{n} a_i}{m^{n-1}} + \frac{b_1}{k}\cdot \frac{\prod_{i=2}^{n} a_i}{m^{n-1}} - \frac{\prod_{i=1}^{n}b_i}{k^n} \right| \\
     &\le \frac{\prod_{i=2}^{n} a_i}{m^{n-1}} \left| \frac{a_1}{m} -\frac{b_1}{k} \right| + \frac{b_1}{k} \left|\frac{\prod_{i=2}^{n} a_i}{m^{n-1}} - \frac{\prod_{i=2}^{n}b_i}{k^{n-1}} \right| \\
     &\le \epsilon_1 + \left|\frac{\prod_{i=2}^{n} a_i}{m^{n-1}} - \frac{\prod_{i=2}^{n}b_i}{k^{n-1}} \right| \\
     &\le  \epsilon_1 + \sum_{i=2}^n \epsilon_i&& \text{(using induction hypothesis) } \\
     &= \sum_{i=1}^n \epsilon_i
 \end{align*}
\end{proof}

\end{document}